\documentclass{llncs}  
%% xo\documentclass[amsthm]{elsart}  

%%\usepackage{yjsco}
\usepackage{amsmath,amssymb}
\usepackage{graphicx}
% \usepackage[dvipdfm]{hyperref}
%% \usepackage{cleveref}[2012/02/15]% v0.18.4; 
% 0.16.1 of May 2010 would be sufficient, but what is the exact day?

%% \crefformat{footnote}{#2\footnotemark[#1]#3}
 \usepackage{float}
\usepackage{fullpage}
 \floatstyle{boxed}
 \newfloat{algo}{h}{loa}
\floatname{algo}{{\textnormal{Algo}}}

\usepackage[ruled,linesnumbered]{algorithm2e}

\def\lc{\ensuremath{\texttt{lc}}}

\def\prem{\ensuremath{\mathsf{prem}}}
\def\coeff{\ensuremath{\mathsf{coeff}}}

\def\l{\ensuremath{\langle}}
\def\r{\ensuremath{\rangle}}

\def\Q{\ensuremath{\mathbb{Q}}}

\def\R{\ensuremath{\mathbb{R}}}

\def\calC{\ensuremath{\mathcal{C}}}
\def\calD{\ensuremath{\mathcal{D}}}
\def\calI{\ensuremath{\mathcal{I}}}
\def\Tree{\ensuremath{\mathcal{T}}}

\def\xgcd{\ensuremath{\mathrm{xgcd}}}

\def\LargestFactor{\ensuremath{\mathrm{largestFactor}}}

\def\nilpotentFactor{\ensuremath{\mathrm{nilpotentFactor}}}
\def\WeierstrassMonic{\ensuremath{\mathrm{WeierstrassMonic}}}
\def\subres{\ensuremath{\mathrm{modifiedSubres}}}

\def\fm{\ensuremath{\mathfrak{m}}}
\def\fa{\ensuremath{\mathfrak{a}}}

\newcommand{\jump}[1]{\ensuremath{[\![#1]\!]} }

 \def\bfS{\ensuremath{\mathbf{S}}}

 \newcommand{\x}[1]{\ensuremath{x_1,\ldots,x_{#1}}}

\newtheorem{Lem}{Lemma}
\newtheorem{Rmk}{Remark}
\newtheorem{ex}{Example}

\begin{document}

\begin{frontmatter}
  \title{Computation of gcd chain over the power of an irreducible polynomial} 
%%\titlerunning{Bit-size reduction of triangular sets in two or three variables}
%% \subtitle{[Extended abstract]}

 \author{Xavier Dahan\thanks{extended version of a presentation made at the conference ADG'2018 (September~12-14, Nanning, China)}} %% \and Marc Moreno Maza\inst{2}}

   \institute{Ochanomizu university,
     Faculty of General Educational Research \& Dept. of Mathematics\\
     \email{xdahan@gmail.com}
 %%     \and
 %%     University Of Western Ontario, Dept. of Computer Science, ORCCA\\
 %%     \email{moreno@uwo.ca}
   }

%% \author{Xavier Dahan}
%%   \address{Ochanomizu university (Tokyo),
%% Faculty of General Educational Research\& Dept. of Mathematics}
%%   \ead{dahan.xavier@ocha.ac.jp,\ xdahan@gmail.com}
%%

\maketitle

\begin{abstract}
A notion of gcd chain has been introduced by the author at ISSAC 2017 for two univariate monic polynomials
with coefficients in a ring $R = k[\x{n}]/\l T\r$ where $T$ is a {\em primary} triangular set of dimension zero.
A complete algorithm to compute such a gcd chain remains  challenging. This work
treats completely the case of a triangular set $T=(T_1(x))$ in one variable, namely a power of an irreducible polynomial.
This seemingly ``easy'' case reveals the main steps necessary for treating
the general case, and it allows to isolate the particular one step that
does not directly extend and requires more care.

%% In~\cite{Da17} the first author introduced a gcd notion for two univariate polynomials with coefficients modulo a primary triangular ideal
%% $\fq$ of dimension zero, based on the unique factorization modulo $\fq$.
%% It was  coined {\em gcd-chain}.  The primary ideal is supposed to be 
%% given by a lexicographic Gr\"obner basis (lexGb) in a pure triangular form, hence a triangular set $\l T\r$.
%% While the notion and existence are proved, algorithms given are limited or rely on a strong assumption.
%% This article studies in more details this algorithm,
%% especially by showing how to iterate subresultant
%% algorithm when the first nilpotent remainder is met with Weierstrass preparation theorem.
%% This also identifies when an exact generalization of the ideal equality 
%% $\l a,b \r =\l g\r$ through the proposed isomorphism fails to exist;
%% this corrects an inaccurate   statement of the Definition of gcd-chain.
%% For example, for a triangular set in one variable such an isomorphism always exists.
%% A preliminary implementation is also presented that treats when the first nilpotent
%% subresultant can be put into the form of application of the Weierstrass preparation theorem.
%% This as well as another issue, that of the recovery of the precision loss that follows, are  let for 
%% future work.
\end{abstract}
%% 

%% 
%% \begin{keyword}
%% Triangular set, gcd
%% \end{keyword}
%%  

\end{frontmatter}

\section{Introduction}
Computing gcd is without a doubt   one of the most fundamental algorithm  in computer algebra
and computational aspects have been  studied extensively, till today.
%%Still today,
%every year several articles treat of this topic, would it be about approximate gcds, the related subresultant algorithms etc.
In~\cite{Da17} is introduced the concept of {\em gcd chain} to bring a similar notion of the classical gcd 
of polynomials of one variable  over a field, to the case of over a ring $R:=k[\x{n}]/\l T \r$ where $T$ is a primary triangular set
of dimension zero.
Such a ring has nilpotent  elements, and non nilpotent  elements are invertible.
Some attempts to treat this case in prior~\cite{Da17} have concluded in somewhat  unsatisfactory solutions.
Indeed,
%% for polynomials of one variable over an ideal that is ``almost'' maximal, here a primary ideal,
a desirable fundamental property of gcd is an ideal equality $\l a,b\r = \l g\r $. While if $a$ and $b$ have coefficients
in such a ring of type $R$, % =k[\x{n}]/\l T\r$ where $T$ is a {\em primary} triangular set,
it is well-known that a polynomial $g$ does not exist in general, the outcome of~\cite{Da17}
being to present a strategy to circumvent this impediment by ``iterating'' somehow a Pseudo Remainder
Sequence when a nilpotent remainder is met.
On the algorithmic side, this raises several challenging questions, even in the seemingly ``easy'' case 
of a primary triangular set $T=(T_1(x_1))$ of {\em one} variable. As this article shows,
this case is already not simple. And it is  important since it builds the framework to tackle the case of 
several variables. In particular we identify that all steps, except one that requires more work,
extend to more than one variable.

\subsection{Motivation} An early motivation in computing gcds
over triangular sets come from the {\em triangular-decomposition} algorithm
to solve polynomial (commutative or differential) systems~\cite{wang2012elimination,chenMMM2012}.
This set of computational methods traces back to the early work of Ritt~\cite{Ri32},
and the major computational advances realized later by Wu-Wen Tsu~\cite{WWT86}.
This has lead  to several new directions of researches, followed by many researchers.
In term of algorithms, only {\em pseudo-divisions} were initially used.
In 1993, in~\cite{Ka93} Kalkbrenner introduced a ``gcd''-point of view to realize
the decomposition, and the elimination (See also the notes~\cite{Hu1}).
This point of view has later been significantly developed by M. Moreno-Maza {\em et al.}
%% essentially by the second author and collaborators,
in particular with the implementation of the   library {\tt RegularChains}~\cite{LeMMMXi04} in the software  Maple.

However, such a gcd does not handle ``faithfully'' polynomials having multiplicities;
this question was raised as early as 1995~\cite{MMMOo95} and later studied furthermore
in ~\cite{liMMM09}, but without a satisfactory general answer.
In this regard, the present work situates in the realm of triangular decomposition
as initiated by Wu-Wen Tsu.

The gcd chain has the following {\em geometric} interpretation.
The underlying  triangular set can be thought as some algebraic constraints,
over which one may want to compute with further polynomials,
that is over the solutions of the constraints {\em only}.
It may happen that the solution (a constraint),
is {\em multiple}. One may think of $t_1(x)=x^3$ for example,
in the case where  constraints are modeled by a polynomial
of one variable like in this work.
When computing {\em over} $t_1$, this allows to consider Taylor expansions at order 2
(for example to control the first and second order derivative of the solution$=$constraint being modeled).

In Example~\ref{ex:x^3} below, we want to compute the constraints defined by both polynomials $a$ and $b$
with coefficients in $\R[x]/\l t_1\r = \R[x]/\l x^3\r$. We obtain three cases, 
displayed in Figure~\ref{fig:three},
as computed by the algorithm of this paper. 
%% However these may depend if one consider
%% only position (precision $x$), position and speed (precision $x^2$) or position, speed acceleration  (precision $x^3$).
%% 
\begin{figure}\label{fig:three}
\includegraphics[height=5cm]{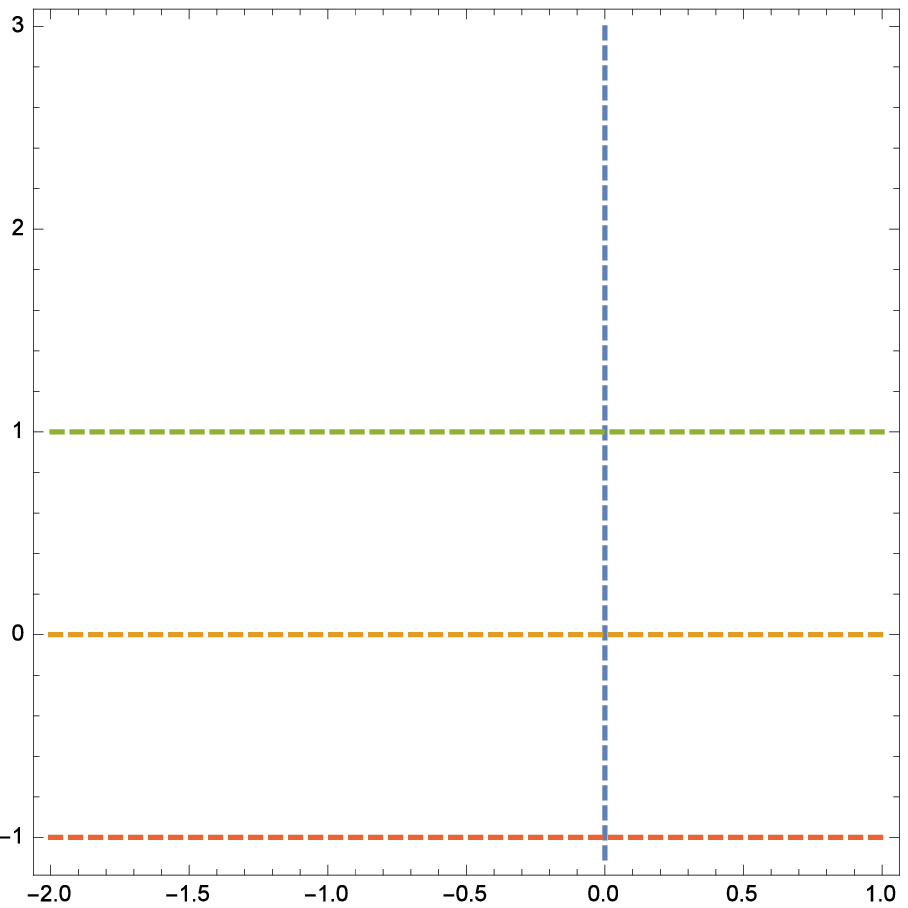}
\ \ \includegraphics[height=5cm]{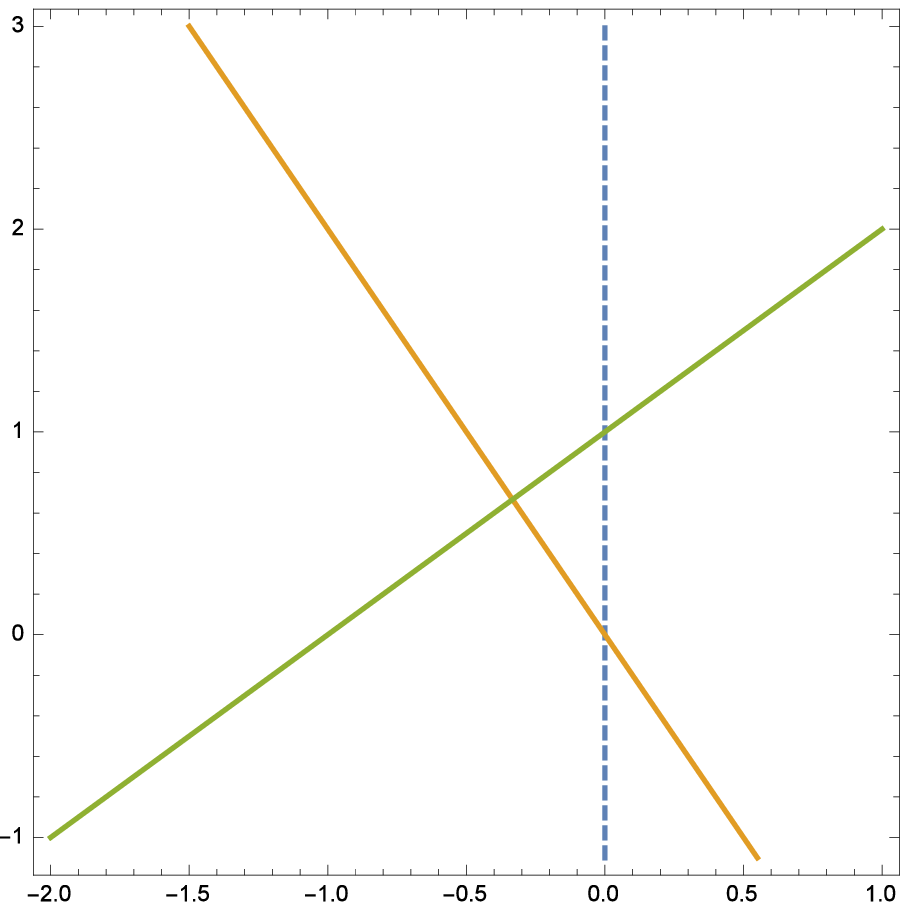}
\ \ \includegraphics[height=5cm]{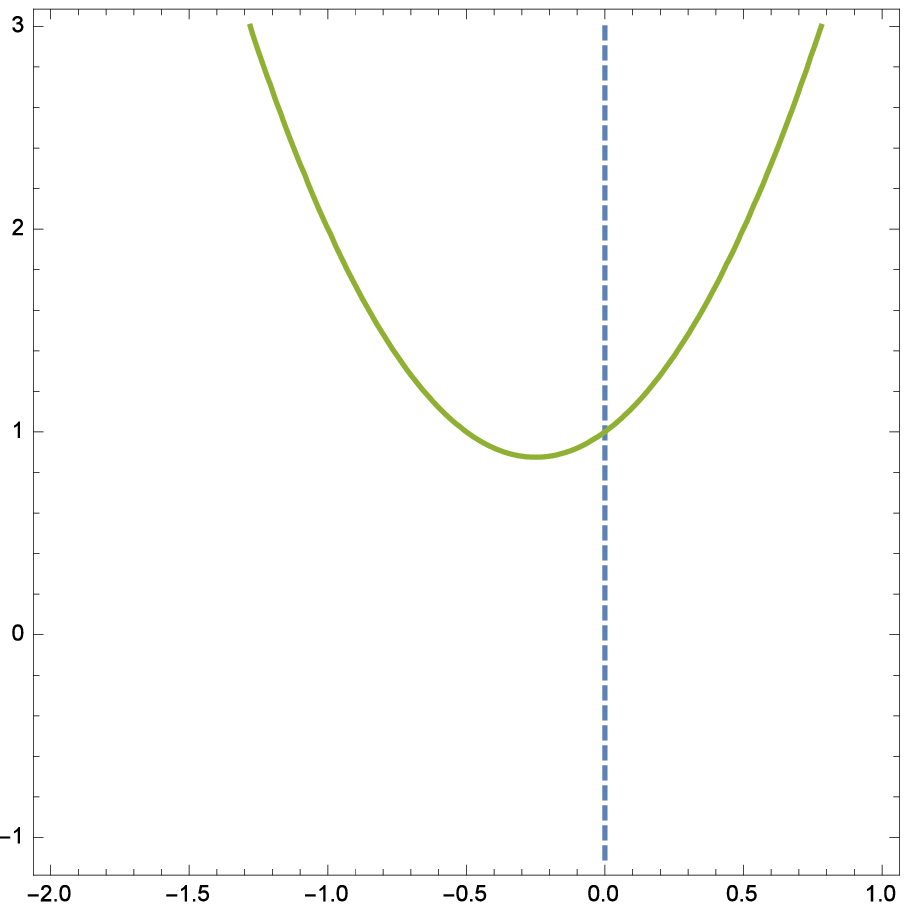}
\caption{Precision $x$ (left): Three points intersection of $y=0,-1,1$ with $x=0$. Precision $x^2$ (middle): Two lines, expanded from $y=0,1$. Precision $x^3$ (right):
One parabola, expanded from $y=1$}
\end{figure}

\subsection{Definitions}
A primary triangular set in one variable is just a power of an irreducible polynomial $p$:
$T=(T_1(x)) = (p(x)^e)$. The ring $R=k[x]/\l T\r $ is local of
maximal ideal $\fm=\l p\r$. It is therefore {\em Henselian}: monic polynomials admit a unique factorization into coprime factors
(not into irreducibles: see~\cite[\S~3.1]{Da17} for more details). 
Before giving the technical definitions, a connection with the classical gcd over
unique factorization domains will enlighten the differences with the new notion.

Since $R$ is Henselian  we can write the unique factorization of the input polynomials $a$ and $b$ into coprime factors as follows:
\begin{equation}\label{eq:factorab}
a =  a_1\cdots   a_\gamma \cdot a_{\gamma+1} \cdots a_\alpha,  \qquad b =  b_1\cdots b_\gamma \cdot b_{\gamma+1} \cdots b_\beta.
\end{equation}

We have ordered the factors so that $\gcd(a_i \bmod \l p\r , b_j \bmod \l p\r)=\rho_i^{\nu_i} $ where $\rho_i \in (k[x]/\l p\r)[y]$ is an irreducible
polynomial; And  $\nu_i  = \min (\lambda_i  , \delta_i)$ where $a_i =\rho_i^{\delta_i}+\cdots$ and $b_i=\rho_i^{\lambda_i} +\cdots$ (so
that $a_i \equiv \rho_i^{\delta_i} \bmod \l p\r $ and $b_i \equiv \rho_i^{\lambda_i} \bmod \l p\r $).
If $a$ and $b$ have some factors $a_\ell$ and $b_m$ for $\ell,m >\gamma$ then 
we assume that $\gcd(a_\ell \bmod \l p\r , b_m \bmod \l p\r)=1$.
\begin{ex}\label{ex:factors}
We reproduce the example of~\cite[Ex. 3.5, Ex. 5.2]{Da17}.
Consider the polynomials $a$ and $b$ along with their unique factorizations into coprimes modulo $T=x^3$.
{\footnotesize \begin{align*}
a&=a_1 a_2 a_3 = ((y+1)^2 + x (2 y + 1) + x^2 (y+1)) (y+2 x + 3 x^2) (y- 1 - x -2 x^2)\\
b& =b_1 b_2 b_3 = (y+1 +  2 x + x^2) (y+2 x +4 x^2) (y-1-x -2 x^2)
\end{align*}}
We have $\gamma=3$ and in each case $\gcd(a_i \bmod \l x\r ,b_i \bmod \l x\r)$ is non trivial.
\end{ex}

The informal discussion that follows is rigorously detailed in  Sections~4.1-4.2 of~\cite{Da17}.
Under this point of view, it is convenient to refer to the more common terminology of unique factorization domains,
but there is a caveat:  ``precision''.
Modulo $p$, the gcd of $a$ and $b$ (over the field $k[x]/\l p\r$) is well-defined and is as expected:
$g=\gcd ( a_i \bmod \l p\r , b_j \bmod \l p\r) = \prod_{i=1}^\gamma \rho_i^{\nu_i}$.
But there is no isomorphism $ \l a,b,T\r = \l g,T\r$, only the more coarse $\l a,b,p\r = \l g,p\r$ holds.
A more refined notion of ``common factors at certain precision'' allows to obtain an ideal equality $\l a,b,T\r$.
\begin{definition}\label{def:precision}
A monic polynomial $\ c\ $ is said to be a {\em common factor} of $a$ and $b$ {\em at precision} $\ell$ iff the Euclidean divisions
of $\ a\ $ and $\ b\ $ by $\ c\ $ have both zero remainder modulo $p^\ell$,
but at least one  non-zero modulo $p^{\ell+1}$.
$$
a = c a' + r_a ,\quad  b =  c b' + r_b \quad  \Rightarrow \quad  r_a \ , \ r_b \equiv 0 \bmod \l p^\ell \r, 
\quad  r_a {\rm~or~}  r_b \not\equiv 0 \bmod \l p^{\ell+1}\r.
$$
\end{definition}
Let $\calI_1$ be the set of indices of common factors of $a$ and $b$ at the smallest precision $e_1$. 
Let $\calI_2$ be the set of indices of common factors $a$ and $b$ at the next to smallest precision $e_2$, etc.
we  obtain a partition  $(\calI_1,\calI_2,\ldots,\calI_s)$ of the set of common factors of $a$ and $b$ 
and the associated precision exponents $[e_1,\ldots,e_s]$.

\begin{ex}(Example~\ref{ex:factors} continued)
We notice that $e_1 = 1$ and $\calI_1  = \{1\}$. Indeed $y+1$ is the largest common divisor of $a_1$ and $b_1$,
and it is modulo $x^{e_1}=x$ ; There is  no  common divisor of $a_1$ and $b_1$
modulo $x^{e_1+1}=x^2$.
Next we observe that $e_2 = 2$  and $\calI_2=\{2\}$: the largest common divisor of $a_2$ and $b_2$ is
$y+2 x $ and it is modulo $x^{e_2}=x^2$ (there is no common divisor of $a_2$ and $b_2$
modulo $x^{e_2+1}=x^3$). Last modulo
$x^e= x^3$ the factors $a_3$ and $b_3$ have a common divisor implying that $\calI_3=\{e_3\}$.
\label{ex:sequence}\end{ex}

Sections~4.1-4.2 of~\cite{Da17}
prove the {\em existence and uniqueness}, in  the case of a triangular set of one
variable,  of the tuple of
indices $(\calI_1,\calI_2,\ldots,\calI_s)$
and of the sequence of increasing precision powers $[e_1,e_2,\ldots,e_s]$.
Before that,  with the notations of
Eq.~\eqref{eq:factorab}, denote $G_j^{(a)} := \prod_{j \in \calI_j} a_i$ and
$G_j^{(b)} := \prod_{j\in \calI_j} b_i$, so that:
\begin{equation}\label{eq:a=Gia}
a = \left( \prod_{i=1}^s G_i^{(a)}\right) \cdot a_{\gamma+1}\cdots a_{\alpha},
\qquad b = \left( \prod_{i=1}^s G_i^{(b)}\right) \cdot b_{\gamma+1}\cdots b_{\beta}.
\end{equation}
And from the discussion above 
both $G_i^{(a)} $ and  $ G_i^{(b)}$ 
have one maximal common factor which is at precision $p^{e_i}$.
Let us write it $G_i$ (since ehere it comes alone,
it makes no harm to think of $G_i$ as ``the'' gcd  and to write $G_i \equiv \gcd( G_i^{(a)} ,\ G_i^{(b)}) \bmod p^{e_i}$.
There exist monic polynomials $c_i^{(a)}$ and $c_i^{(b)}$ both relatively
prime modulo $p$, such that: 
\begin{equation}\label{eq:Gia}
G_i^{(a)} \equiv  c_i^{(a)} G_i \bmod \l p^{e_i}\r ,\quad  \text{and}
\quad  G_i^{(b)} \equiv  c_i^{(b)} G_i \bmod \l p^{e_i}\r,
\end{equation}
We obtain an  equality of ideals:
$\l a,b, T\r = \l G_1,p^{e_1}\r \cdots \l G_s ,  p^{e_s}\r$.
The formal definition of gcd chain modulo a triangular set of one variable is as follows:

\begin{definition}\label{def:gcdChain}
Given two monic polynomials $a,b\in R[y]$,  a {\em gcd chain} of $a,b$
is  a sequence $(g_i , p^{e_i})_{i=1,\ldots,s}$ such that:
\begin{itemize}
\item  $e_1 <\ldots <e_s \le e$ and $\deg_y (g_1) >\cdots >\deg_y(g_s)$.
\item $g_i$ is the product of all common factors of $a$ and $b$ at precision $\ge e_i$. 
\item $g_{i+1}$ divides $g_{i}$ modulo $p^{e_{i+1}}$. Defining $G_{i}:=g_{i}/g_{i+1} \bmod p^{e_{i+1}}$  for  $i=1,\ldots , s-1$  and $G_s:=g_s$,
the following isomorphism holds:
\begin{equation}\label{eq:isochain}
R[y]/\l a,b\r \simeq (k[x]/\l p^{e_1} \r) [y]/\l G_1\r \times \cdots 
 \times 
(k[x]/\l p^{e_s} \r) [y]/\l G_s\r 
\end{equation}
where the r.h.s is a direct product of rings. We have moreover for all $i=1,\ldots,s$:
\begin{equation}\label{eq:block}
(k[x]/\l p^{e_i}\r)[y]/\l a,b\r \simeq (k[x]/\l p^{e_1} \r) [y]/\l G_1\r 
\times \cdots \times (k[x]/\l p^{e_{i-1}} \r) [y]/\l G_{i-1}\r 
\times (k[x]/\l p^{e_{i}} \r) [y]/\l g_{i}\r
\end{equation}
\end{itemize}
\end{definition}

\begin{ex}(Examples~\ref{ex:factors},~\ref{ex:sequence} continued) \label{ex:x^3}
Let $p(x)=x$, $T=(T_1(x)) =(x^3)=(p^3)$.
%% To fix notations $\fq=\l T\r= \l x_1^3\r$,
%% $\fp=\l p_1\r =\l x_1\r=  \sqrt{\l T\r}$.
The two monic polynomials $a$ and $b$ when expanded are written: 
{\small
$$\begin{array}{rcl}
a & =& y^4 + \left(2 x^2+3 x +1\right) y^3+\left(-x^2 - x -1\right) y^2
%% & + & 
\left(-13 x^2-4 x -1\right) y-7 x^2-2 x\\
b & =&   y^3 + (3 x^2 + 3 x) y^2 + (-3 x^2 - 3 x - 1) y - 10 x^2 - 2 x
\end{array}
$$
}
According to the discussion made in Example~\ref{ex:sequence},
  $e_1=1,\ e_2=2,\ e_3=3$. The  gcd chain is given by:
$$
[ ((y-1) y (y+1) ,\  x) \ ,\ ((y-1-x)(y+2 x ) ,\ x^2)\ , \ (y-1-x - 2 x^2,\ x^3)],
$$
and yields the  following isomorphism according to~\eqref{eq:isochain}.
\begin{multline}
(k[x]/\l x^3\r)[y]/\l a ,b \r \simeq 
(k[x]/\l x\r)[y]/\l y+1\r \times 
(k[x]/\l x^2\r)[y]/\l y+2 x \r\\  \times
(k[x]/\l x^3\r)[y]/\l y-1-x - 2 x^2 \r
\end{multline}
\end{ex}
However, Section~5 of~\cite{Da17} dealing with algorithms  
is more an indication of directions for future work, than a complete and definitive exposition.
This is what the present work does, treating
the case of one variable completely.

\subsection{Related works\label{ssec:related}}
First of all, let us clarify what may appear as 
an elephant in  the room:
\begin{center}
why not computing the squarefree part of $T=p^e$ ?
\end{center}
The first reason is that sticking with $T=p^e$ really allows to discover the gcd-chain. Otherwise
starting from the gcd known at precision $p$
% but which of its factors are known at higher precision is then more
% complicated to find. 
one could perform Hensel lifting and, roughly speaking,  see at each step which parts
still divide $a$ and $b$ etc.
But then quadratic convergent lifting may  not be convenient enough to 
recover each precision $e_i$, in that it may miss some.
Some refinements should then be devised\ldots with additional costs.
The linearly convergent Hensel lifting is not as efficient.
The second  reason is  for considerations of generalization to  a primary triangular set
of {\em several} variables.
Computing the radical is still possible (see {\em e.g.} \cite{mou2013decomposing,LiMouWang2010})
at a reasonable cost, but this time recovering the factors at the required ``precision''
becomes far more complicated, assuming it is possible.
Consequently, for convenience and  generalization in mind,
\begin{center}
the squarefree part operation is not considered.
\end{center}
Gcds over non-radical triangular sets have been addressed in~\cite{MMMOo95,liMMM09}.
But none provided a structural isomorphism that mimics the case of gcd of polynomials over
a field. In a different direction, some works have focused on representing not
only solution points, but their {\em multiplicity} as well~\cite{Cheng2014,Li2003multiplicities,Marcus2012,Alvandi2015}. The multiplicity is a much
coarser information than
what provides an ideal isomorphism; Moreover the methods proposed in these works are
not simpler than the one proposed in~\cite{Da17} and here. See Section ``Previous Work'' of~\cite{Da17}
for more details.
Compared to the algorithmic Section~5 of~\cite{Da17}
the present article gives a complete treatment, with  Hensel lifting; in particular
the over-optimistic over-simplistic Assumption~(C) made therein is elucidated
with Weierstrass preparation theorem.

When the input irreducible polynomial is $p(x) = x$,  the article~\cite{moroz2016fast}
which computes a ``truncated'' resultant of bivariate polynomials $a$ and $b$
shares some common features with the present work.
However, the gcd-chain carries more information than the classical resultant,
and it is not clear wether the quite intricate algorithm that is devised in~\cite{moroz2016fast}
can be adapted.
Therein, a generalized version of  the  ``half-gcd'' is proposed whose purpose
is a better asymptotic complexity.
 Another significant
difference is the use of Weiertrass factorization to cope with nilpotent leading coefficients.
This notion is more natural and efficient than the ``normalization'' lemma~1 of~\cite{moroz2016fast};
It does not require Hensel lifting and to know the squarefree part $p$ of the input $T=p^e$.
%% 
%% See the Conluding Remarks section for more about the resultant
%% computation with the present algorithm.
%% 

Besides, the ``clarity'' of the classical subresultant based  study presented here 
allows to detect that  all tasks used in the algorithm extend
quite straightforwardly
to primary triangular sets of more than one variable,
except one:
 Algo.~\ref{algo:nilFac}  ``$\nilpotentFactor$''  which ``removes'' nilpotent part.
See Section ``Concluding remarks'' for more about this.

\subsection{Organization of the paper}
Section~\ref{sec:subres} introduces the core routine ``largestFactor''
that executes one iteration of the subresultant algorithm.
It finds the ``last non-nilpotent subresultant'',
uses the Weierstrass factorization to make it, together with the first nilpotent subresultant,
monic before another iteration.
The complete algorithm ``gcdChain'' that calls ``largestFactor''
is presented and analyzed in Section~\ref{sec:gcdChain}.
The last section~\ref{sec:Weier}
details the variation of the Weierstrass preparation theorem that we have used. 
Some concluding remarks end the article.

\subsection*{Acknowledgement} The idea of  Weierstrass preparation theorem
was transmitted to me by M. Moreno-Maza. I am grateful
for this precious advice and other discussions we had in March 2018.

\section{A subresultant based main algorithm\label{sec:subres}}

The outcome of this section is Algorithm~\ref{algo:largestFactor}
``largestFactor'' presented in Subsection~\ref{ssec:largestFactor}.
It introduces  Weierstrass factorizations
at Lines~\ref{enum:WeierA} and~\ref{enum:WeierB}.
%% , as well as Hensel lifting at Line~\ref{enum:HL}
In the first subsection   several subroutines
used in this algorithm are introduced first.

\subsection{Preliminary routines}
%%The first subroutine 

%% catches the last non-nilpotent subresultant
%% in a subresultant sequence.
%% 
%% \begin{algo}
%% \SetAlgoLined
%% \SetAlgoNoEnd
%% \DontPrintSemicolon
%% \KwIn{Subresultant sequence $\bfS=[S_{r_0},S_{r_1},\ldots ,S_{r_t}]$
%%  and a power of a irreducible polynomial  $T=(p(x)^e)$.}
%% \KwOut{index $j$ such that $S_{r_j}$ is not nilpotent, and $S_{r_{j+1}}$ is nilpotent}
%% 
%% 
%% $j=t$\;
%% 
%%  \While(\tcp*[f]{is leading coefficient $\lc(S_{r_j})$ nilpotent $\bmod p^e$ ?})
%% {$\Res_{x}( {\rm lc} (S_{r_j}),T)=0 $ in $ k$ \nllabel{line:isNil}}
%% {$j=j-1$}
%% \Return{$j+1$}
%% \caption{\label{algo:lastNil} index of the last non nilpotent subresultant}
%% \end{algo}
%% 
%% \begin{proof}[Correctness of Algo~\ref{algo:lastNil}]
%% It uses the Block structure of a subresultant sequence.
%% In our context, this has been sort out in Lemma~2 of~\cite{liMMM09}.
%% (see also 2nd paragraph of ``Correctness of Algo. 1'' in~\cite[p.~115]{Da17}).
%% \hfill $\Box$ \end{proof}

Algorithm~\ref{algo:nilFac} computes the largest power
of $p$ that divides all the coefficients of a polynomial
in $(k[x]/\l p^e\r)[y]$.  It is a key routine in the present work.

\begin{algo}
\SetAlgoLined
\SetAlgoNoEnd
\DontPrintSemicolon
\KwIn{$T = p^e$, power an irreducible polynomial. Nilpotent polynomial $F = F_0 + F_1 y +\cdots + F_r y^r $ modulo $T$}
\KwOut{$P=p^\ell$, where $\ell \le e$, and  $p^\ell\mid F$ but $p^{\ell+1}\nmid F$\\
Largest degree $i$ such that $F_i$ is invertible if exists; $-1$ otherwise. 
}

$P \leftarrow T$\ , \ \ $i \leftarrow r$ \;

\While{$i> -1$ and $\deg(P)>0$}{
  $P\leftarrow \gcd(F_i,P)$\; ;  \quad  $i\leftarrow i-1$\;
}
\Return{$P, i$}
\caption{\label{algo:nilFac} nilpotentFactor: find the largest power of $p$ that divides $F$,  and $T$}
\end{algo}

\begin{proof}[Correctness of Algo.~\ref{algo:nilFac}]
  For any value of $i$, one has $P = \gcd(F_r , F_{r-1}, \ldots,F_i,T)$. Since $T = p^e$, we can write $p^{\ell_i} := \gcd(F_r,\ldots,F_i,T)$.
  The sequence $\{ \ell_i\}_i$ decreases. Let $\ell$ be the minimal value. By definition $p^\ell\mid F$ since $p^\ell \mid F_j$ for all $j$,
  and $\ell$ is the largest integer having this property. Note that the output $P$ is precisely $p^\ell$.
On the other hand, if $\deg(P) = 0$, then $\gcd(F_i, p^{\ell_{i-1}}) = 1$
and $\ell_{i-1}>0$ since otherwise the while loop would have been exited before. We see
that $F_i$ is the coefficient of $F$ of largest degree that is invertible --- when it exists.
\hfill $\Box$ \end{proof}

Let $P,i$ be the ouput of Algorithm~\ref{algo:nilFac}. The first statement in the remark hereunder follows directly from the algorithm. The second one from
the classical fact that a polynomial is nilpotent iff all its coefficients are nilpotent. 
\begin{Rmk} If $\deg(P)>0$ then $i = -1$ and reciprocally. The polynomial $F$ is nilpotent iff $i=-1$.
\end{Rmk}

A  key subroutine Algo.~\ref{algo:WeierMonic} ``$\WeierstrassMonic$'',
is used to ``make monic'' a non-nilpotent polynomial. 
It is based on Corollary~\ref{cor:div}, which
details are postponed to Section~\ref{sec:Weier}.

\begin{algo}
\SetAlgoLined
\SetAlgoNoEnd
\DontPrintSemicolon
\KwIn{Primary triangular set $T=(p(x)^e)$\\
  Univariate polynomial $f=\cdots+f_k y^k +\cdots $ in $R[y]$ where $R=k[x]/\l p^e\r$,
  index $k$ of the  non nilpotent coefficient of $f$ of largest degree}

\KwOut{Monic polynomial $h=y^k + h_{k-1} y^{k-1} +\cdots+ h_0$ in the Weierstrass factorization theorem
(Corollary~\ref{cor:div})\\
In particular $\l f, T\r = \l h , T\r$ (ideal equality)}

$ (u p^e + v f_k=1) \leftarrow $ Extended Euclidean Algorithm of $p^e$ and $f_k$ in $k[x]$\;

$f_k^{-1} \leftarrow v$\; \tcp*[f]{inverse of $f_k$ modulo $p^e$}

$y^k = f g + r  \leftarrow$ Division of $y^k$ by $f$ following Corollary~\ref{cor:div} (using $v$), up to precision $O(y^{k+1})$\;

\Return{$y^k -r$}
\caption{\label{algo:WeierMonic} WeierstrassMonic}
\end{algo}

Last, the routine Algo.~\ref{algo:subres} below computes the subresultant
pseudo remainder sequence modulo $T$ and proceeds to the necessary 
modifications. All specifications are described within the algorithm.
%%Before, the following lemma studies the subresultant algorithm modulo $T$.

\begin{algo}
\SetAlgoLined
\SetKwProg{try}{try}{}{}
\SetKwProg{catch}{catch}{}{}
\SetAlgoNoEnd
\DontPrintSemicolon
\KwIn{Monic polynomials $a,b$ with $\deg_y(a)\ge \deg_y(b)>0$.\\
 Power of an irreducible polynomial  $T=(p(x)^\epsilon)$.}
\KwOut{ Subresultant pseudo-remainder sequence modulo $T$:   $\bfS=[S_{r_0}= a , S_{r_1}= b,  S_{r_2},  ,\ldots, S_{r_t},\ldots]$\\
Index $j$  such that $S_{r_j}$ is not nilpotent and $S_{r_{j+1}}$ is nilpotent or zero\\
Integer $i$ such that $\coeff(S_{r_j},i)$ is the non-nilpotent coefficient of largest degree\\
If $S_{r_{j+1}}$ is not zero, largest degree polynomial $P(x)=p(x)^\ell\in k[x]$ such that $P | S_{r_{j+1}}$.}
\BlankLine

\try{\nllabel{enum:try1}}{
  
  Compute a subresultant sequence modulo $T$: $ \bfS =[S_{r_0}= a , S_{r_1}= b,  S_{r_2},  ,\ldots, S_{r_t},\ldots]$ \nllabel{enum:sresT}\;

  $j+1\leftarrow $ index of last non-zero subresultant~\nllabel{enum:lastNonZero}\;
  
  $i_{new} \leftarrow -1$~;  $P_{new}\leftarrow 0$\;

}
\catch(\tcp*[f]{division failed at the $r_{t+2}$-th subresultant  \nllabel{enum:error}})
      {(``Error: Leading coefficient of $S_{r_{t+2}}$ is not invertible'')}
      {
        $j \leftarrow t+1$~; $P_{new}, i_{new} \leftarrow \nilpotentFactor (S_{r_j} , T)$~\nllabel{enum:18}\;
        
        \If(\tcp*[f]{$S_{r_j}$ is not nilpotent \nllabel{enum:nomoreNil}}){$i_{new} > -1$}{
          $A \leftarrow  \WeierstrassMonic(S_{r_{j}}  , T, i_{new})$ \nllabel{enum:AWeier}\;
          
          $\_\_ , i \leftarrow \nilpotentFactor (S_{r_{j-1}}, T)$ \hfill \tcp*[f]{$\coeff(S_{r_{j-1}},i)$ is invertible} \nllabel{enum:indexi}\;
          
          $B \leftarrow \WeierstrassMonic(S_{r_{j-1}} , T , i)$\nllabel{enum:BWeier}\;
          
          \Return{$\subres (A, B, T)$}
        }
      }
      
      \While(\tcp*[f]{$S_{r_j}$ is nilpotent \nllabel{enum:stillNil}}){$i_{new} = -1$}{
        $i_{old} \leftarrow i_{new}$~; $P_{old} \leftarrow P_{new}$~; $j \leftarrow j -1$\;
        
        $P_{new}, i_{new} \leftarrow \nilpotentFactor ( S_{r_j} , T)$\;
      }

\Return{$\bfS,\ j,\ i_{new},\ P_{old}$}  
\caption{\label{algo:subres} $\subres$: Modified subresultant pseudo-remainder sequence}
\end{algo}

\begin{Lem}[Correctness of Algo.~\ref{algo:subres}]
The output $\bfS, j, i, P$ satisfies the specifications mentionned in ``ouput''.
\end{Lem}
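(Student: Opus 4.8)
The plan is to trace the three exit points of the algorithm separately and verify that in each case the returned tuple matches the stated output specification; the correctness of the subroutines $\nilpotentFactor$ and $\WeierstrassMonic$ (already established) and the basic properties of the subresultant pseudo-remainder sequence modulo $T$ are taken for granted. First I would record the structural invariant that governs the subresultant computation modulo a primary triangular set: each $S_{r_k}$ is obtained from $S_{r_{k-1}}$ and $S_{r_{k-2}}$ by a pseudo-division step, and this step succeeds exactly when the relevant leading coefficient is invertible in $R=k[x]/\l p^\epsilon\r$, i.e.\ non-nilpotent; otherwise the \texttt{catch} branch is triggered. So the computation in the \texttt{try} block runs without exception precisely when every leading coefficient encountered is a unit, and in that case the ordinary subresultant theory applies verbatim, giving a full sequence down to the last non-zero term.

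Next I would handle the \texttt{try}-block exit (the final \Return\ at the bottom). Here no pseudo-division failed, so $\bfS$ is a genuine subresultant pseudo-remainder sequence modulo $T$; the \texttt{while} loop at Line~\ref{enum:stillNil} walks $j$ downward from the index of the last non-zero subresultant, calling $\nilpotentFactor$ at each step, and stops as soon as it reaches an index $j$ with $i_{new}>-1$, meaning (by the Remark following Algo.~\ref{algo:nilFac}) that $S_{r_j}$ is \emph{not} nilpotent while $S_{r_{j+1}}$ — the previous one in the loop — was nilpotent or zero. At that moment $i_{new}$ is the largest degree of an invertible coefficient of $S_{r_j}$, and $P_{old}$ holds the output of $\nilpotentFactor$ on $S_{r_{j+1}}$, i.e.\ $P_{old}=p^\ell$ with $p^\ell\mid S_{r_{j+1}}$ and $p^{\ell+1}\nmid S_{r_{j+1}}$ (when $S_{r_{j+1}}\neq 0$); this is exactly the fourth component of the specification. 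I must also check the corner case where the last non-zero subresultant is itself non-nilpotent, so the loop runs only once, as well as the case where the sequence reaches $0$, confirming that the bookkeeping of $i_{old},P_{old}$ versus $i_{new},P_{new}$ shifts the indices correctly by one.

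Then I would treat the \texttt{catch}-block exit. The exception tells us that the pseudo-division producing $S_{r_{t+2}}$ failed because its would-be leading coefficient is non-invertible, hence nilpotent; we set $j\leftarrow t+1$ and call $\nilpotentFactor(S_{r_j},T)$. If $S_{r_j}$ turns out to be nilpotent ($i_{new}=-1$), control falls through to the same \texttt{while} loop as before and the analysis of the previous paragraph applies, now starting the descent from $j=t+1$. If instead $i_{new}>-1$, then $S_{r_j}$ is non-nilpotent with invertible coefficient of largest degree indexed $i_{new}$; we invoke $\WeierstrassMonic$ on $S_{r_j}$ and on $S_{r_{j-1}}$ (the latter's relevant index $i$ being recovered by another $\nilpotentFactor$ call, which is legitimate since $S_{r_{j-1}}$, being an earlier non-nilpotent term, certainly has an invertible coefficient) to obtain monic polynomials $A,B$ with $\l A,T\r=\l S_{r_j},T\r$ and $\l B,T\r=\l S_{r_{j-1}},T\r$, and recurse via $\subres(A,B,T)$. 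Correctness here is \emph{by induction on the recursion}: I would argue the recursion is well-founded because $A$ and $B$ have strictly smaller $y$-degree than $a$ (being subresultants of index $\ge 1$, made monic without raising degree), so the hypothesis applies and the recursive call already satisfies the specification; what remains is to observe that replacing $S_{r_j},S_{r_{j-1}}$ by the ideal-equivalent $A,B$ does not change the subresultant-type data being sought.

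The main obstacle I anticipate is the \texttt{catch}/recursion case: one must be confident that passing to the Weierstrass-monic forms $A,B$ genuinely preserves the object the algorithm is computing — i.e.\ that the subsequent subresultant sequence of $A,B$ continues the sequence of $a,b$ in the sense relevant to the gcd chain — and that the recursion strictly decreases a well-chosen measure (here $\deg_y b$, or the pair of degrees) so it terminates. Establishing this cleanly requires knowing that $\WeierstrassMonic$ only strips a nilpotent factor and a non-nilpotent content without altering the "monic essence" of the polynomial, which is the content of Corollary~\ref{cor:div} deferred to Section~\ref{sec:Weier}; the rest of the proof is routine case-checking against the four-part output specification.
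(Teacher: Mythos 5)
Your plan follows the same two-branch case analysis as the paper's own proof: the no-exception exit is handled by walking the bottom-up while loop to identify $j$ as the first non-nilpotent index (so that $S_{r_{j+1}}$ is nilpotent or zero and $P_{old}$ records its exact $p$-content), and the exception-caught exit either falls through to that same loop or, when $S_{r_{j}}$ is not nilpotent (hence neither is $S_{r_{j-1}}$), applies $\WeierstrassMonic$ and recurses, with correctness of the recursion resting on the ideal equality $\l a,b,T\r = \l S_{r_t},S_{r_{t+1}},T\r = \l A,B,T\r$. Your explicit attention to a termination measure and to the fact that $\WeierstrassMonic$ only rescales by a unit is a sound elaboration of exactly what the paper compresses into the single remark that ``the recursive call is consistent,'' so the two proofs are in substance identical.
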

\begin{proof}

Step~1: The computation of the subresultant p.r.s. modulo $T$ raises no exception at Line\ref{enum:sresT}.
This means that the  full chain has been
successfully computed, in particular the last subresultant is zero. 
By definition of the index $j$ at Line~\ref{enum:lastNonZero},  its index is $r_{j+2}$   and $S_{r_{j+1}}$ is the last  non  zero (mod $T$).

It goes next to Line~\ref{enum:stillNil}. The ``return'' occurs exactly when  $i_{new}> -1$ or equivalently 
that $S_{r_j}$ is not nilpotent; Additionally that $S_{r_{j+1}}$ is nilpotent or zero.
The specification of Algorithm ``$\nilpotentFactor$'' tell  that
 $i_{new}$ is the largest degree of $S_{r_j}$ whose coefficient $c_{i_{new}}$ is not nilpotent.
Since a polynomial is nilpotent iff all its coefficients are nilpotent, this implies that
$S_{r_j}$ is not nilpotent. Since  the while loop (Line~\ref{enum:stillNil}) is bottom-up
on the subresultant chain, $j$ is the {\em first} index for which the corresponding subresultant is not nilpotent.
Therefore $S_{r_{j+1}}$ is nilpotent, and the specifications of Algorithm ``$\nilpotentFactor$'' 
imply that $i_{old}=-1$ and that $P_{old}\mid S_{r_{j+1}}$. This shows that 
all the requirements sateted in the ``output'' of Algo.~\ref{algo:subres} are satisfied.
\smallskip

Step~2: The computation of the subresultant p.r.s. failed, an exception is caught. Lines~\ref{enum:sresT}-\ref{enum:lastNonZero} are skept
to go directly to Line~\ref{enum:18}. The leading coefficient of $S_{r_{t}}$ is not invertible therefore the 
subresultant $S_{r_{t+2}}$ {\em cannot} be computed (mod $T$) following the (classical subresultant p.r.s.) formula:
$$
S_{r_{t+2}}(a,b) =   \pm  \frac{\prem(S_{r_t} , S_{r_{t+1}} )}{c_{t}^{r_t - r_{t+1}} \lc(S_{r_{t}})},
\quad 
\begin{cases} c_1 = 1, \quad r_i  = \deg(S_{r_i})\\
c_i =  \lc(S_{r_i})^{r_{i-1}-r_i} c_{i-1}^{r_i-r_{i-1}+1}.\\ 
\end{cases}
$$
Nonetheless, $S_{r_t}$ and $S_{r_{t+1}}$ have been both computed successfully.
Index $j$ is set to $t+1$. If $i_{new} > -1$ then $S_{r_{t+1}}$ is not nilpotent,
therefore neither is $S_{r_t}$. Thus the call to Algo. ``$\WeierstrassMonic$'' is valid
at Lines\ref{enum:AWeier} and ~\ref{enum:BWeier}. And the recursive call to ``$\subres$'' makes sense.
Eventually, no exception is raised, or $S_{r_j}$ is nilpotent (at Line~\ref{enum:18}).
Moreover the recursive call is consistent since $\l a, b,T\r = \l S_{r_{t}} , S_{r_{t+1}} , T\r = \l A ,B,T\r$.

If $i_{new}=-1$ then $S_{r_j}$ is nilpotent, but it may not be the first nilpotent subresultant.
It goes next to the while loop (line~\ref{enum:stillNil}) and the proof follows that of Step~1.
 $\Box$ \end{proof}

\subsection{Algorithm ``largestFactor''\label{ssec:largestFactor}}
Now that the subroutines are defined, we describe
the main algorithm of this section.
It builds upon
the subresultant algorithm,
where instead of computing the ``last non-zero'' one,
it computes the ``last non-nilpotent'' subresultant.
Besides the comments and specifications,
the proof of correctness discusses in details of the several
steps. 
\begin{algo}
\SetAlgoLined
\SetKwProg{try}{try}{}{}
\SetKwProg{catch}{catch}{}{}
\SetAlgoNoEnd
\DontPrintSemicolon
\KwIn{Monic polynomials $f,A$ with $\deg_y(f)\ge \deg_y(A)$. Power of an irreducible polynomial  $T=(p(x)^\epsilon)$.}
\smallskip

\KwOut{ $g,\ p^{\epsilon_1},\ B,\ p^\ell \text{ or } \_\_ $  where $g$ is  monic and $\l g ,  p^{\epsilon_1} \r = \l f ,A , p^{\epsilon_1}\r$~;\\
 $B=$``$end$''~; or $B \in k[y]$  monic, $\deg_y(g) \ge\deg_y(B) $~;\\
$p^\ell = p^{\epsilon - \epsilon_1}$ is used for iterative calls in Algorithm~\ref{algo:gcdChain}\;
}
%% and $p^\ell = p^{e-e_1}$ is the new triangular set for recursive call marked with the preision loss}
\medskip

 \If(\tcp*[f]{Finished: No iteration necessary}){$A=0$}{\Return{$f,\ T$,\ ``$end$'',\ $\_\_$}\nllabel{enum:enda}}

 \If(\tcp*[f]{Finished: No iteration necessary}){$\deg(A)=0$}{\Return{$A,\ T$,\ ``$end$'',\ $\_\_$}\nllabel{enum:endb}}

 $\bfS, j, i, p^{e_1} \leftarrow \subres (A, f,T)$ \hfill \tcp*[f]{$S_{r_j}$ is not nilpotent, $S_{r_{j+1}}$ is. $p^{e_1} \mid S_{r_{j+1}}$ }\nllabel{enum:28}\;

%% $p^{e_1} \leftarrow \nilpotentFactor(S_{r_{j+1}},T)$ \tcp*[f]{Extract nilpotent part}\nllabel{enum:pell}\;
 
  $g \leftarrow \WeierstrassMonic(S_{r_j},T,i)$ \hfill \tcp*[f]{Put $S_{r_j}$ in  monic form} \nllabel{enum:WeierA} \;

\If(\tcp*[f]{case where $e_1=e$}){$p^{e_1} = T$\nllabel{enum:ifzero}}{

  \Return{ $g,\ T,$ ``$end$''$,\ \_\_ $}\nllabel{enum:returnzero}
}

  $ S \leftarrow S_{r_{j+1}} / p^{e_1} $ \nllabel{enum:divbyp} \hfill \tcp*[f]{Precision loss of $\deg(p^{e_1})$; $S$ is no more nilpotent}
%% . Note the precision loss by $p^\ell$. Without Hensel lifting, the remaining computations would be done at precisoin $p^{e -e_1}$} \;

$p^{\ell} \leftarrow T/p^{e_1}=p^{e-e_1}$\nllabel{enum:33}\;

   $\_\_ , i \leftarrow \nilpotentFactor(S,p^\ell)$ \hfill \tcp*[f]{$\coeff(S,i)$ is invertible}\nllabel{enum:34}\;

  $B \leftarrow \WeierstrassMonic(S , p^\ell,i )$ \hfill \tcp*[f]{Put $S$ in monic form} \nllabel{enum:WeierB}\;

%% $B\leftarrow {\rm HenselLift} (S,A,b,T,p^{e_1})$ \hfill \tcp*[f]{Recover precision loss}  \nllabel{enum:Hensel}\;

\Return{ $g ,\ p^{e_1} ,\ B ,\ p^\ell$}\nllabel{enum:return0}
\caption{\label{algo:largestFactor} The Largest Common Factor ($\LargestFactor$)}
\end{algo}

\begin{Lem}[Correctness of Algo.~\ref{algo:largestFactor}]
\label{lem:correct}
The output of Algorithm~\ref{algo:largestFactor} verifies: $\l f,A,p^{e_1}\r=\l g,p^{e_1}\r$.
\end{Lem}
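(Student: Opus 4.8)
The plan is to trace the execution paths of Algorithm~\ref{algo:largestFactor} and verify that in every return branch the claimed ideal equality $\l f,A,p^{e_1}\r = \l g,p^{e_1}\r$ holds (with the convention that $e_1=\epsilon$, i.e. $p^{e_1}=T$, in the three ``finished'' branches). The key observation underpinning all cases is that at each stage we only ever replace a polynomial by another generating the \emph{same} ideal modulo the relevant power of $p$: subresultant pseudo-remainder sequences preserve the ideal (so $\l A,f,T\r = \l S_{r_k},S_{r_{k+1}},T\r$ for consecutive indices), and $\WeierstrassMonic$ returns, by its specification and Corollary~\ref{cor:div}, a monic $h$ with $\l h,T\r = \l S_{r_j},T\r$. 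So the bulk of the proof is bookkeeping about which precision survives.

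First I would dispatch the two trivial branches at Lines~\ref{enum:enda} and~\ref{enum:endb}: if $A=0$ then $\l f,A,T\r = \l f,T\r = \l g,T\r$ since $g=f$; if $\deg(A)=0$ then $A$ is a unit times a power of $p$ dividing... more carefully, $A$ monic of degree $0$ means $A=1$, so $\l f,A,T\r = \l 1\r = \l g,T\r$ with $g=A=1$. Next, the call at Line~\ref{enum:28} to $\subres$ returns, by the Lemma on its correctness, the subresultant chain together with an index $j$ such that $S_{r_j}$ is not nilpotent, $S_{r_{j+1}}$ is nilpotent (or zero), and $p^{e_1}\mid S_{r_{j+1}}$ with $p^{e_1}$ the largest such power (when $S_{r_{j+1}}\neq 0$). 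I then set $g = \WeierstrassMonic(S_{r_j},T,i)$, so that $\l g,T\r = \l S_{r_j},T\r$.

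The heart of the argument is then the identity
\begin{equation}\label{eq:keyid}
\l f,A,p^{e_1}\r = \l S_{r_j}, S_{r_{j+1}}, p^{e_1}\r = \l S_{r_j}, p^{e_1}\r = \l g, p^{e_1}\r.
\end{equation}
The first equality holds because consecutive subresultants generate the same ideal as $(A,f)$ modulo $T$, hence a fortiori modulo $p^{e_1}$ (as $p^{e_1}\mid T$). The second equality is the crucial one: $p^{e_1}\mid S_{r_{j+1}}$ means $S_{r_{j+1}}\equiv 0 \bmod p^{e_1}$, so $S_{r_{j+1}}$ contributes nothing to the ideal modulo $p^{e_1}$; when $S_{r_{j+1}}=0$ this is vacuous. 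The third equality is $\l g,T\r = \l S_{r_j},T\r$ reduced modulo $p^{e_1}$. Then in the branch at Line~\ref{enum:ifzero}, $p^{e_1}=T$ and the returned pair is $(g,T)$, so \eqref{eq:keyid} is exactly the claim. In the main return at Line~\ref{enum:return0}, the first two returned components are again $(g,p^{e_1})$, and the subsequent construction of $S$, $p^\ell$, and $B$ does not touch the pair $(g,p^{e_1})$ at all — those are only used downstream in Algorithm~\ref{algo:gcdChain} — so \eqref{eq:keyid} again gives the result directly.

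I expect the only delicate point to be making rigorous the step $\l S_{r_j},S_{r_{j+1}},p^{e_1}\r = \l S_{r_j},p^{e_1}\r$, i.e. being careful that ``$p^{e_1}\mid S_{r_{j+1}}$'' is meant coefficient-wise in $k[x][y]$ and therefore genuinely forces $S_{r_{j+1}} \in \l p^{e_1}\r$ in $R[y]$, which is what licenses dropping it from the generating set modulo $p^{e_1}$; this relies squarely on the specification of $\nilpotentFactor$ proved earlier. A secondary point is handling the exceptional control flow inside $\subres$ uniformly — but its correctness Lemma already guarantees the output satisfies the stated specification regardless of whether the ``try'' or ``catch'' branch fired, so I can treat $\subres$ as a black box here. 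Everything else is routine substitution using ideal-preservation of subresultant p.r.s. and of $\WeierstrassMonic$.
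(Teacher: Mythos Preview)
Your proof follows essentially the same route as the paper's: case-split on the return branches, use the specification of $\WeierstrassMonic$ to pass from $S_{r_j}$ to $g$, and identify $\l S_{r_j}, p^{e_1}\r$ with $\l f, A, p^{e_1}\r$. The one substantive difference is how that last identification is argued. You split it as
\[
\l f, A, p^{e_1}\r \;=\; \l S_{r_j}, S_{r_{j+1}}, p^{e_1}\r \;=\; \l S_{r_j}, p^{e_1}\r
\]
and treat the first equality as routine (``consecutive subresultants generate the same ideal as $(A,f)$ modulo $T$''). The paper instead invokes the \emph{last non-nilpotent criterion} of subresultants \cite[Thm~5.1]{Da17} for $\l S_{r_j}, p^{e_1}\r = \l f, A, p^{e_1}\r$ directly, and classical subresultant theory for the special sub-case $p^{e_1}=T$.

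You have the delicacy reversed. The step you flag as delicate, $\l S_{r_j}, S_{r_{j+1}}, p^{e_1}\r = \l S_{r_j}, p^{e_1}\r$, is in fact immediate once $p^{e_1}\mid S_{r_{j+1}}$. It is the \emph{first} equality---more precisely the inclusion $f, A \in \l S_{r_j}, S_{r_{j+1}}, T\r$---that is not automatic over $R$: walking a subresultant p.r.s.\ backwards requires inverting the leading coefficients $\lc(S_{r_k})$, and over $k[x]/\l p^e\r$ these may be nilpotent. That this nonetheless works here (thanks to the Weierstrass-monicify-and-recurse design of $\subres$, and because $S_{r_j}$ is the \emph{last} non-nilpotent term) is exactly the content packaged in \cite[Thm~5.1]{Da17}, not a routine fact about p.r.s.
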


\begin{proof}
If the algorithm exits at Line~\ref{enum:enda} then $\l f,A,T\r = \l f ,0,T\r =\l f,T\r$.
%% The gcd chain has length 1.
It then returns $f, T,$``$end$'', hence  $g = f$ and $p^{e_1} = T$, and $\l f,T\r = \l g,p^{e_1}\r$.
The gcd chain has one block.

If the algorithm exits at Line~\ref{enum:endb} then $A$ is monic constant, hence equal to 1; therefore
 $\l f,A,T\r = \l A ,T\r = \l T,1\r=\l 1\r $.
It returns $A, T, $``$end$'', hence  $g = A$ and $p^{e_1} = T$, thereby $\l g,p^{e_1}\r=\l A,T\r$. 
Here too the gcd chain has one block.

Assume now that the algorithm ends at Line~\ref{enum:return0} or at Line~\ref{enum:returnzero}.
At Line~\ref{enum:28}, according to the specifications of Algo.~\ref{algo:subres},
 $S_{r_j}$ is not nilpotent and $S_{r_{j+1}}$ is either nilpotent or zero (mod $T$).
Moreover $p^{e_1}$ is the largest power of $p(x)$ that divides $S_{r_{j+1}}$, and $\coeff(S_{r_j} ,i)$
is the coefficient of largest degree that is invertible.
%% 
%% 
%%  
%% Lines~\ref{enum:try2}-\ref{enum:catch} computes the subresultant pseudo-remainder sequence
%% modulo $T$ of $g$ and $f$, following the classical Collins-Brown's formula. As it uses divisions
%% of some leading coefficients, it may raise an error since  modulo $T$  elements are not all invertible.
%% This error is caught and the computation of the subresultant is ended. 
%% It means that  this coefficient is nilpotent which is precisely what we are looking for.
%% That is why no additional treatment is necessary in the ``catch'' Line~\ref{enum:catch}.
%% 
%% 
%% At Line~\ref{enum:lastNil}, Specification of Algorithm~\ref{algo:lastNil} insures
%% that $S_{r_j}$ is the last non nilpotent subresultant, whence $S_{r_{j+1}}$ is nilpotent.
%% 
%% At Line~\ref{enum:pell} is removed the nilpotent part of $S_{r_{j+1}}$ equal to $p^{e_1}$.
We can apply Algorithm~\ref{algo:WeierMonic} ``WeierstrassMonic'' at Line~\ref{enum:WeierA}; It outputs
a monic polynomial $g$
such that $S_{r_j}= u\cdot g$ where $u$ is a unit in $(k[x]/\l p^{e_1} \r)[y]$.
In particular $\l g, p^{e_1}\r =\l S_{r_j} , p^{e_1} \r $ $(\ddagger)$.

If the test of Line~\ref{enum:ifzero} is satisfied, then $S_{r_j}$ is not only
the last non-nilpotent subresultant but also the last non-zero one:  classical
subresultant theory insures that $\l S_{r_j} \r = \l f, A\r$ modulo $T$ (that is $e_1=e$).
From the specifications of the ``WeierstrassMonic'' algorithm we have $\l g \r =\l S_{r_j}\r$
and thus $\l g,T\r = \l f,A,T\r$ (with $T = p^{e_1}$ since $e=e_1$).
The ouput Line~\ref{enum:return0} is then in accordance with the specifications.

If this test is not satisfied, more data needs to be output
since additional works must be performed, like Hensel lifting.
We compute  then $S = S_{r_{j+1}}/p^{e_1}$ at
Line~\ref{enum:divbyp} which is not  nilpotent. 
%% We can apply Algorithm~\ref{algo:WeierMonic} ``WeierstrassMonic'' at Line~\ref{enum:WeierA}; It outputs
%% a monic polynomial $g$
%% such that $S_{r_j}= u\cdot g$ where $u$ is a unit in $(k[x]/\l p^{e_1} \r)[y]$.
%% In particular $\l g, p^{e_1}\r =\l S_{r_j} , p^{e_1} \r $.
By the ``last non-nilpotent criterion'' of subresultant~\cite[Thm~5.1]{Da17} 
$\l S_{r_j} ,p^{e_1} \r = \l f,A,p^{e_1}\r$. Hence $\l g, p^{e_1}\r = \l f , A , p^{e_1}\r$, by $(\ddagger)$, as required.

Similarly,  Algorithm~\ref{algo:WeierMonic} at Line~\ref{enum:WeierB} returns a monic polynomial $B$ such that $ S = v\cdot  B$, $v$ being a unit modulo $p^\ell$.
Note that $B$ is monic and satisfies $\deg_y(B) \le \deg_y(g)$,  as required.
\hfill $\Box$ \end{proof}

\begin{Lem}\label{lem:B1}
If a monic polynomial $\ c\not=1\ $ is a common factor of $f$ and $A$ at precision $r > e_1$ then  $\ c\ $ is a monic divisor of $B$ at precision $ r- e_1$.
A  monic divisor $\ c\not=1\ $ of $\ B\ $ is not a common factor of $f$ and $A$
at precision $<e_2$.
\end{Lem}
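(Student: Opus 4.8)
The plan is to track what the division by $p^{e_1}$ at Line~\ref{enum:divbyp} does to divisibility, using the structure of the subresultant chain. First I would recall the key facts produced by Line~\ref{enum:28} and Line~\ref{enum:divbyp}: $S_{r_{j+1}}$ is divisible by $p^{e_1}$ but not by $p^{e_1+1}$, the quotient $S$ is no longer nilpotent, and $B$ is the monic form of $S$ modulo $p^\ell=p^{e-e_1}$, so $\langle B, p^\ell\rangle = \langle S, p^\ell\rangle$ by Algorithm~\ref{algo:WeierMonic}. I would also invoke the subresultant specification $\langle f, A, p^{e_1}\rangle = \langle S_{r_j}, p^{e_1}\rangle$ together with the general principle that the whole ``tail'' of the subresultant chain past $S_{r_j}$ — in particular $S_{r_{j+1}}$ — captures the common factors of $f$ and $A$ that survive at precisions above $e_1$; this is the content underlying the partition $(\calI_1,\ldots,\calI_s)$ from the Definitions section and~\cite[\S4.1--4.2]{Da17}.

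For the first assertion, suppose $c\neq 1$ monic is a common factor of $f$ and $A$ at precision $r>e_1$. Then $c$ divides $f$ and $A$ modulo $p^r$, hence (since $r>e_1$) modulo $p^{e_1+1}$; but $c$ is \emph{not} among the factors collected at precision $e_1$, so $c$ divides $S_{r_{j+1}}$ modulo $p^r$ — indeed the common factors at precisions $>e_1$ all sit inside $S_{r_{j+1}}$ once the precision-$e_1$ part has been peeled off by $S_{r_j}$. Now $S_{r_{j+1}}\equiv p^{e_1} S \bmod p^r$, so writing $S_{r_{j+1}} = c\,q + \rho$ with $\rho\equiv 0\bmod p^r$, and using that $p^{e_1}$ exactly divides $S_{r_{j+1}}$ while $p^{e_1}\nmid \rho$ only up to order $r$, one divides through by $p^{e_1}$: $c$ divides $S\bmod p^{r-e_1}$, and then $c$ divides $B\bmod p^{r-e_1}$ because $B$ and $S$ are associates modulo $p^{\ell}\supseteq p^{r-e_1}$ (note $r\le e$ forces $r-e_1\le\ell$). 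Since $c$ is monic this is exactly the statement that $c$ is a monic divisor of $B$ at precision $r-e_1$.

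For the second assertion, I would argue contrapositively. A monic divisor $c\neq 1$ of $B$ divides $S\bmod p^\ell$, hence $p^{e_1}c$ divides $S_{r_{j+1}}$ to precision $e_1+\ell=e$; if $c$ were a common factor of $f$ and $A$ at precision $<e_2$, then by minimality of $e_1$ (it is the smallest precision in the partition) that precision would have to be $e_1$ itself, forcing $c$ into the precision-$e_1$ block. But the precision-$e_1$ block is already entirely accounted for by $g$ (the monic form of $S_{r_j}$), and $B$ is built from $S_{r_{j+1}}/p^{e_1}$, which is coprime to that block modulo $p$ — this is the coprimality of the $c_i^{(a)},c_i^{(b)}$ in Eq.~\eqref{eq:Gia} transported through the subresultant tail. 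Hence no nontrivial monic divisor of $B$ can be a common factor at precision $<e_2$.

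The main obstacle I anticipate is making rigorous the claim that ``the common factors of $f$ and $A$ at precisions $>e_1$ are precisely the factors dividing $S_{r_{j+1}}$ after removing $p^{e_1}$'' — that is, the precise correspondence between the subresultant $S_{r_{j+1}}$ (mod the appropriate power of $p$) and the product $\prod_{i\ge 2} G_i$ of the higher-precision gcd blocks. This requires carefully combining the ``last non-nilpotent subresultant'' criterion~\cite[Thm~5.1]{Da17} with the factorization~\eqref{eq:a=Gia}--\eqref{eq:Gia}, and checking that dividing by $p^{e_1}$ does not create spurious divisors (which is where the non-nilpotence of $S$ and the exactness ``$p^{e_1}\mid S_{r_{j+1}}$ but $p^{e_1+1}\nmid S_{r_{j+1}}$'' are essential). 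The degree bookkeeping $r\le e$, $r-e_1\le \ell$ is routine but must be stated to justify passing from $p^\ell$ down to $p^{r-e_1}$.
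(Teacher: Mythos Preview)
Your overall strategy is sound, but you are making the argument much harder than it needs to be, and the ``main obstacle'' you flag is in fact completely avoidable. The paper's proof never invokes the block factorization~\eqref{eq:a=Gia}--\eqref{eq:Gia} or any structural description of $B$ in terms of the $G_i$'s. Instead it uses the one fact you never mention: the B\'ezout (cofactor) identity
\[
S_{r_{j+1}} \;=\; u_{j+1}\,f \;+\; v_{j+1}\,A
\]
coming from the subresultant p.r.s. For the first assertion this gives everything in one line: if $c\mid f$ and $c\mid A$ modulo $p^r$, then $c\mid S_{r_{j+1}}$ modulo $p^r$ immediately, with no need to argue that ``common factors at precisions $>e_1$ sit inside $S_{r_{j+1}}$'' via the partition $(\calI_1,\ldots,\calI_s)$. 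From there, $S_{r_{j+1}}=p^{e_1}\,v\,B$ with $v$ a unit and $c$ monic (hence not divisible by $p$) yields $c\mid B$ modulo $p^{r-e_1}$; the exactness $p^{e_1}\Vert S_{r_{j+1}}$ gives precision exactly $r-e_1$.

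For the second assertion your coprimality argument (``$B$ is coprime modulo $p$ to the precision-$e_1$ block'') presupposes precisely the kind of structural description of $B$ that this lemma is meant to help establish, so it risks circularity. The paper again works directly with cofactors: if $c$ were a common factor of $f,A$ at precision $\ell<e_2$, then necessarily $\ell\le e_1$ (no common factors at precisions $e_1{+}1,\ldots,e_2{-}1$); writing $f=cf'+p^\ell(\cdots)$, $A=cA'+p^\ell(\cdots)$ and plugging into the B\'ezout identity gives $p^{e_1}B = c(\cdots)+p^{\ell}W$ with $W\ne 0$, hence $B=c(\cdots)+p^{\ell-e_1}W$. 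For $c$ to divide $B$ modulo $p$ one needs $\ell>e_1$, contradicting $\ell\le e_1$. Replacing your structural route with this direct cofactor computation eliminates the obstacle you identified.
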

\begin{proof}
Let $c$ be a factor of $f$ and $A$ at precision $r>e_1$. We can write: $f \equiv c f'  \bmod p^r$
and $A \equiv  c A' \bmod p^r$
(by Definition~\ref{def:precision}, both equalities do not hold together modulo $p^{r+1}$). 
Consider the B\'ezout coefficients (a.k.a cofactors) $u_\iota$, $v_\iota$ of the subresultant $S_{r_\iota}$.
From the equality $S_{r_{j+1}} = u_{j+1} f + v_{_{j+1}}  A$, one deduce that  $S_{r_{j+1}} \equiv c ( u_{j+1} f' + v_{j+1} A') \bmod p^r$.
We also have $p^{e_1} B v \equiv   c ( u_{j+1} f' + v_{j+1} A') \bmod p^r$. Now $c$ being  monic $p^{e_1}$ does not divide $c$,
and we have: $B\equiv c \frac{v^{-1} u_{j+1} f' + v^{-1}  v_{j+1} A'} {p^{e_1}} \bmod p^{r-e_1}$.
Since $r>e_1$ by assumption, $B \equiv 0 \bmod \l p^{r-e_1} ,c\r$ and $c $ is a factor of $B$ at precision at least $r-e_1\ge 1$.
Moreover, since $p^{e_1}$ is the largest power of $p$
that divides $S_{r_{j+1}}$, the precison is exactly $r - e_1$.
This proves the first assertion.

Consider a monic divisor $C$ of $B$ at precision $ < e_2$. This means that the
remainder of the Euclidean division of $B$ by $C$ is in $\l p\r$ but not in $\l p^{e_2}\r$.
If $C$ were a common factor of $A$ and $f$ at precision $\ell < e_2$, 
then it is at precision $\ell \le e_1$ since there is no common factor of $f$
and $A$ at precision $e_1 + 1 ,\ldots, e_2 - 1$, by definition.
We would obtain
$p^{e_1} B   \equiv v^{-1} (u_{j+1} C f' + v_{j+1} C A' + p^{\ell} W) \bmod \l p^e \r$
for a non-zero polynomial $W$, 
and $B \equiv C \frac{ v^{-1} u_{j+1}  f' + v^{-1} v_{j+1}  A'}{p^{e_1}} + p^{\ell-e_1} W \bmod \l p^{e-e_1} \r$.
For $C$ to divide $B$ modulo $p^\ell$, the term $p^{\ell -e_1} W$ shall be zero  at least modulo $p$.
This happens only if $\ell >e_1$ since $W\not=0$. Contradiction with $\ell \le e_1$.
\hfill $\Box$ \end{proof}

\section{The Gcd-Chain Algorithm\label{sec:gcdChain}}

Now that all sub-routines used in Algorithm~\ref{algo:largestFactor} ``largestFactor''
are defined, we introduce in this section main  Algorithm~\ref{algo:gcdChain} below.
It is made of a while loop 
that  has two main components:
\begin{enumerate}
\item a call to Algorithm~\ref{algo:largestFactor} ``largestFactor'' (line~\ref{enum:iterate}) . A loss of precision
is entailed in the division at Line~\ref{enum:divbyp}. %% in the call to Algorithm~\ref{algo:largestFactor}
\item recovering this precision loss at each iteration by Hensel lifting
(Lines~\ref{enum:triggerHL}-\ref{enum:nextblock}).
\end{enumerate}
Each iteration computes one ``block'' of the gcd-chain.
The correctness is shown in Theorem~\ref{thm:gcdChain}, with several preliminaries made of Lemmas~\ref{lem:HL}-\ref{lem:firstblock},
Proposition~\ref{prop:prelim} and its corollary~\ref{cor:prelim}.
These preliminaries contain a detailed description of each step of the algorithm.
\begin{Rmk}\label{eq:exponents}
In the sequel the triangular sets $T_i$, $T'_i$ are
all power of the irreducible polynomial $p$. These exponents
denoted $e_i$ or $e_i'$ 
are introduced only for the analysis
and are not needed in the  algorithms.  
This is consistent with the principle made in \S~\ref{ssec:related}
to not compute squarefree parts.
\end{Rmk}

\begin{algo}
\SetAlgoLined
\SetAlgoNoEnd
\DontPrintSemicolon
\KwIn{Power of an irreducible polynomial  $T=(p(x)^e)$\\
  Univariate polynomials $a$ and $b$ in $R[y]$ where $R = k[x]/\l T \r$}

\KwOut{Lists of polynomials $\calC = [g_1,\ldots,g_s]$  , $\calD=[G_1,\ldots,G_{s-1}]$\\
List of polynomials equal to powers of the irreducible polynomial $p$:  $\Tree = [p^{e_1} , p^{e_2},\ldots,p^{e_s}]$
such that $[(g_1,p^{e_1}),\ldots , (g_s,p^{e_s})]$ is the gcd-chain  of $(a,b,T)$
}

\medskip

$g_0 \leftarrow a $\ ; \  \ $B_0  \leftarrow b$ \ ;\  $S_0 \leftarrow T$\ ;\   $i \leftarrow 0$\ ; \ $e_0 \leftarrow 0$\ ; 
\ \ $T_0 \leftarrow T$\; 

$\calC\leftarrow [\, ]$\ ; \ \ $\calD \leftarrow [\, ]$ \ ; \ \ $\Tree \leftarrow [\, ]$ \;

\While(\nllabel{enum:while}\tcp*[f]{Algo.~\ref{algo:largestFactor} ended
  Line~\ref{enum:enda} or \ref{enum:endb} or \ref{enum:returnzero}}){$B_i \not= $``$end$'' and $B_i\not=1$}{

  $g_{i+1},  \ T_{i+1}'  ,\ B_{i+1} ,\ S_{i+1} \leftarrow  \LargestFactor(g_i,B_i, S_i) $ 
  \hfill \nllabel{enum:iterate}\tcp*[f]{Iteration of Algo.~\ref{algo:largestFactor}} \;

%%     $S_{i+1} \leftarrow S_i / T_{i+1}'$ \hfill \tcp*[f]{$S_{i+1}= p^{e -e_{i+1}} $} \;

 \If(\tcp*[f]{First iteration treated apart}){$i= 0$\nllabel{enum:if0}}{
  $\Tree \leftarrow \Tree {\rm ~cat~}   [     T_{i+1}' ] $\nllabel{enum:42}\; %  \hfill \nllabel{enum:firstblock}\tcp*[f]{First block computed}  \;
}\Else(\nllabel{enum:triggerHL}\tcp*[f]{Trigger Hensel Lifting from second iteration}){

   $G_i \leftarrow  g_i / g_{i+1} \bmod \l T'_{i+1}\r$ \hfill \tcp*[f]{$g_i = G_i  g_{i+1} \bmod \l T'_{i+1}\r $}\nllabel{enum:Gi}\;

   $\alpha,\beta\leftarrow \xgcd( G_i , g_{i+1}  ) \bmod \l T_{i+1}'\r $ \nllabel{enum:xSres}\hfill \tcp*[f]{$\alpha  G_i  + \beta g_{i+1} \equiv 1 \bmod \l T'_{i+1}\r$}\; %%, $u$ unit}\; %  mod. $T'$}\;

 $T_{i+1}\leftarrow T_i  \cdot T'_{i+1} $\nllabel{enum:Tnext} \hfill \tcp*[f]{$T_{i+1}=p^{e_{i+1} }$} \;
 
 ${\rm precision } \leftarrow  \deg(T_{i+1})$\nllabel{enum:precision} \;  

   $ G_i^\star , g_{i+1}^\star \leftarrow {\rm HenselLift}( g_i ,  G_i ,  g_{i+1} , \alpha , \beta ,    T'_{i+1} , {\rm precision})$   \nllabel{enum:HL}\;

   \tcp*[f]{$G_i^\star \equiv G_i \bmod \l T'_{i+1}\r$ ,\ 
     $g_{i+1}^\star \equiv g_{i+1} \bmod \l T'_{i+1}\r  $,
     $g_i \equiv G_i^\star g_{i+1}^\star \bmod \l T_{i+1}\r$}\;
 
   $g_{i+1} \leftarrow g_{i+1}^\star \bmod \l T_{i+1} \r$\nllabel{enum:Anext}\;

   $\calD \leftarrow   \calD {\rm ~ cat~} [ G_i^\star \bmod \l T_i \r]$
   \nllabel{enum:block} \; %%  \ ;\ \ $\Tree \leftarrow \Tree {\rm ~cat~}   [ T' \cdot T_{prev}] $\; %% \hfill \nllabel{enum:block}\tcp*[f]{Add the block $( , TO \ DO)$ to }  \;

  $\Tree \leftarrow \Tree {\rm ~cat~}   [     T_{i+1} ] $  \hfill \nllabel{enum:nextblock}\tcp*[f]{Next block computed}  \;
}

  $\calC \leftarrow   \calC {\rm ~ cat~} [g_{i+1} ]$\nllabel{enum:52}\;

 $i\leftarrow i+1$\; 
}
\Return{$\calC$, $\calD$, $\Tree$}
\caption{\label{algo:gcdChain} The gcd-chain algorithm}
\end{algo}

%% The analysis starts with that of the Hensel lifting.

\begin{Lem}\label{lem:HL}
The Hensel lifting at Line~\ref{enum:HL}
of Algorithm~\ref{algo:gcdChain} returns
$G_i^\star, \ g_{i+1}^\star$ verifying:
$G_i^\star \equiv G_i \bmod \l T'_{i+1}\r$, 
$g_{i+1}^\star \equiv g_{i+1} \bmod \l T'_{i+1}\r  $,
$g_i \equiv G_i^\star g_{i+1}^\star \bmod \l T_{i+1}\r$.
%% It requires $O(\M(\deg_y(g_i)) \M({\rm precision}))$ operations in $k$.
\end{Lem}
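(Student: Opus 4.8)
The plan is to verify that the call to HenselLift at Line~\ref{enum:HL} meets the hypotheses of the classical Hensel lifting lemma for a factorization into two coprime factors, and then simply read off the conclusion. First I would recall the setup at the moment of the call: by Line~\ref{enum:Gi} we have $g_i \equiv G_i\, g_{i+1} \bmod \l T'_{i+1}\r$, and by Line~\ref{enum:xSres} the extended gcd computation produced $\alpha, \beta$ with $\alpha G_i + \beta g_{i+1} \equiv 1 \bmod \l T'_{i+1}\r$, so $G_i$ and $g_{i+1}$ are coprime modulo $p^{e'_{i+1}}$ (where $T'_{i+1} = p^{e'_{i+1}}$). Both $G_i$ and $g_{i+1}$ are monic in $y$: $g_{i+1}$ is monic as the output of $\LargestFactor$ (Lemma~\ref{lem:correct} and the specifications of Algo.~\ref{algo:largestFactor}), and $G_i = g_i/g_{i+1}$ is then monic as the exact quotient of two monic polynomials modulo $T'_{i+1}$. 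That is precisely the input configuration — a monic polynomial with a coprime monic factorization known modulo a prime power — for which linear (or quadratic) Hensel lifting applies.

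Next I would invoke the standard Hensel lifting statement: given $g_i \equiv G_i g_{i+1} \bmod \l p^{m}\r$ with $\alpha G_i + \beta g_{i+1} \equiv 1 \bmod \l p^{m}\r$ and $G_i, g_{i+1}$ monic, one can lift to $G_i^\star, g_{i+1}^\star$ with $g_i \equiv G_i^\star g_{i+1}^\star \bmod \l p^{M}\r$ for any target precision $M \ge m$, with $G_i^\star \equiv G_i \bmod \l p^{m}\r$ and $g_{i+1}^\star \equiv g_{i+1} \bmod \l p^{m}\r$, and the lift being unique with these congruences. Here $m = \deg(T'_{i+1})$ in the sense of the valuation, i.e. the modulus is $T'_{i+1}$, and the target precision is ${\rm precision} = \deg(T_{i+1})$ set at Line~\ref{enum:precision}, with $T_{i+1} = T_i \cdot T'_{i+1}$ from Line~\ref{enum:Tnext}. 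Since $T_{i+1}$ is itself a power of $p$ and $T_i \mid T_{i+1}$, we have $M \ge m$, so the hypothesis on precisions is satisfied. The three claimed congruences of the lemma are then exactly the input/output congruences of this Hensel lifting routine, read modulo $T'_{i+1}$ for the first two and modulo $T_{i+1}$ for the product.

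The one point that genuinely requires a word of justification — and which I expect to be the main (mild) obstacle — is that the factorization $g_i \equiv G_i g_{i+1}$ really does hold \emph{exactly} modulo $T'_{i+1}$, i.e. that the Euclidean division at Line~\ref{enum:Gi} leaves zero remainder, so that there is something coherent to lift. This follows from the specification of $\LargestFactor$: by Lemma~\ref{lem:correct}, $\l g_{i+1}, T'_{i+1}\r = \l g_i, B_i, T'_{i+1}\r$, and since $g_{i+1}$ is monic and is (by construction in Algo.~\ref{algo:largestFactor}, being the monic form of a subresultant of $g_i$ and $B_i$) a divisor of $g_i$ modulo $T'_{i+1}$, the quotient $G_i$ is well-defined with zero remainder. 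I would also note that the coprimality modulo $p$ of $G_i$ and $g_{i+1}$ — needed for the $\xgcd$ at Line~\ref{enum:xSres} to succeed and hence for lifting to be possible at all — is guaranteed because $g_{i+1}$ collects common factors at precision $\ge e_{i+1} > e'_{i+1}$-related strata while $G_i = g_i/g_{i+1}$ collects the strictly coarser ones, so the two share no common factor modulo $p$ (this is the content underlying Lemma~\ref{lem:B1}). With these remarks in place, the lemma is immediate from the correctness of the Hensel lifting subroutine.
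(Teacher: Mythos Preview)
Your proposal is correct and follows the same approach as the paper, which simply cites the classical Hensel lifting routine (Algorithm~15.10 of~\cite{GaGe03}) without further elaboration. You have in fact supplied more detail than the paper does --- verifying monicity, coprimality, and exactness of the division at Line~\ref{enum:Gi} --- but the underlying argument is identical.
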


\begin{proof}
This is classical: the algorithm follows exactly the steps 
presented in Algorithm~15.10 of ~\cite{GaGe03}.
%% 
%% The complexity estimate follows from Theorem~15.11 of~\cite{GaGe03}.
%% The number of iterations required is
%% $\lceil \log_2(\log_{d'}({\rm precision})) \rceil$ where $d'=\deg_x(T'_{i+1})$
%% and ${\rm precision}$ is defined
%% at Line~\ref{enum:precision}. 
\hfill $\Box$ \end{proof}

The next lemma clarifies the status of the input/output when calling Algorithm ``largestFactor'' at Line~\ref{enum:iterate},
within the several iterations. The notations introduced  will be used thereafter.

\begin{Lem}\label{lem:in-out}
 Write $\Tree = [T_1,\ldots,T_s]$ with $T_i = p^{e_i}$, 
 $\calC=[g_1,\ldots,g_s]$ 
and $\calD = [G_1,\ldots,G_{s-1}]$
the three outputs of 
Algorithm~\ref{algo:gcdChain}.
For all $i=0,\ldots,s-1$, consider the output
$g_{i+1},\ T_{i+1}', \ B_{i+1},\ S_{i+1} $ of the call to ``largestFactor''
at Line~\ref{enum:iterate}, and $T_{i+1}$ as defined at Line~\ref{enum:Tnext}. They are related as follows:
$$
T_{i+1}' = p^{e_{i+1} - e_i},\qquad  S_{i+1} = p^{e-e_{i+1}}, \qquad T_{i+1} = p^{e_{i+1}} \qquad ({\rm with~} e_0:=0). 
$$
\end{Lem}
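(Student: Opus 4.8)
The plan is to prove the three identities by induction on $i$, threading through the semantics of Algorithm~\ref{algo:largestFactor} (whose correctness is Lemma~\ref{lem:correct}) and the bookkeeping done in the main loop of Algorithm~\ref{algo:gcdChain}. First I would recall the underlying fact, implicit in Definition~\ref{def:gcdChain} and the discussion around Eq.~\eqref{eq:a=Gia}, that the sequence of precisions $[e_1,\ldots,e_s]$ produced by the gcd-chain is exactly the increasing sequence of ``jump'' exponents attached to the partition $(\calI_1,\ldots,\calI_s)$, and that at the $i$-th stage the pair $(g_i, B_i)$ fed to ``largestFactor'' represents the not-yet-resolved common factors ``at precision $\ge e_i$'', viewed modulo $p^{e-e_i}$. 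Concretely, the induction hypothesis I would carry is: at the top of the loop with index $i$, we have $S_i = p^{e-e_i}$, $B_i$ is a monic polynomial over $k[x]/\l p^{e-e_i}\r$, and $g_i$ is the current partial gcd, with $T_i = p^{e_i}$ (and $e_0 = 0$, $S_0 = T = p^e$, $B_0 = b$, matching the initialization).

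Next I would run one step of ``largestFactor'' under this hypothesis. Its third input is $S_i = p^{e-e_i}$, so inside Algorithm~\ref{algo:largestFactor} the role of ``$T = p^\epsilon$'' is played by $p^{e-e_i}$, i.e.\ $\epsilon = e - e_i$. The call to $\subres$ at Line~\ref{enum:28} returns $p^{e_1}$ in the local notation, which is the smallest precision at which $f = g_i$ and $A = B_i$ still have a nontrivial common factor; by the characterization of the precisions in the gcd-chain and the inductive meaning of $(g_i,B_i)$, this local $p^{e_1}$ equals $p^{\,e_{i+1}-e_i}$. That is exactly $T_{i+1}'$ as returned. Then Line~\ref{enum:33} sets the local $p^\ell = (p^{e-e_i})/p^{\,e_{i+1}-e_i} = p^{\,e-e_{i+1}}$, which is returned as $S_{i+1}$; so $S_{i+1} = p^{\,e - e_{i+1}}$, the second identity. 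Finally, for $i \ge 1$ the main loop sets $T_{i+1} = T_i \cdot T_{i+1}' = p^{e_i}\cdot p^{\,e_{i+1}-e_i} = p^{e_{i+1}}$ at Line~\ref{enum:Tnext}, giving the third identity; for $i = 0$ the same value $T_1 = T_1' = p^{e_1}$ is recorded at Line~\ref{enum:42}, consistently with $e_0 = 0$. The base case is immediate from the initialization together with the $i=0$ branch.

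The main obstacle is the middle claim, namely that the local precision $p^{e_1}$ returned by $\subres$ on input $(g_i, B_i, p^{e-e_i})$ is precisely $p^{\,e_{i+1}-e_i}$, and correspondingly that $(g_{i+1}, B_{i+1})$ genuinely represents the common factors ``at precision $\ge e_{i+1}$'' so that the induction can continue. This requires knowing that one iteration of ``largestFactor'' together with the subsequent Hensel lift at Lines~\ref{enum:HL}-\ref{enum:Anext} strips off exactly the block $\calI_{i+1}$ — no more, no less — and shifts the remaining problem down by the precision $e_{i+1}-e_i$. The key ingredients are: the ``last non-nilpotent subresultant'' criterion $\l S_{r_j}, p^{e_1}\r = \l g_i, B_i, p^{e_1}\r$ from \cite[Thm~5.1]{Da17} invoked in Lemma~\ref{lem:correct}; Lemma~\ref{lem:B1}, which says that a common factor of $g_i$ and $B_i$ at precision $r > e_1$ descends to a factor of $B_{i+1}$ at precision $r - e_1$, and that no divisor of $B_{i+1}$ corresponds to a common factor at precision $< e_2$ (the next jump); and the existence/uniqueness of the precision sequence from Sections~4.1-4.2 of~\cite{Da17}. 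I would assemble these to argue that the precision where $g_i,B_i$ first share a factor is the next element $e_{i+1}$ of the global precision sequence, which is what pins down $T_{i+1}' = p^{\,e_{i+1}-e_i}$; the remaining two identities then follow by the arithmetic already displayed above. The Hensel-lift step itself contributes nothing to the precision exponents of $T_{i+1}'$ or $S_{i+1}$ — by Lemma~\ref{lem:HL} it only re-expresses $g_{i+1}$ to full precision $T_{i+1}$ — so it does not interfere with the bookkeeping, a point worth stating explicitly to close the loop.
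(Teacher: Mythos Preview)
Your induction scaffold is right, and the arithmetic you display in the second paragraph is exactly the paper's argument; but you are over-reading what the lemma asserts and, as a result, you manufacture an obstacle that is not there. In the statement, the $e_i$ are \emph{defined} by the output list $\Tree=[T_1,\ldots,T_s]$ via $T_i=p^{e_i}$. They are not, at this point, the abstract precision exponents of Definition~\ref{def:gcdChain}. Consequently the claim $T_{i+1}'=p^{e_{i+1}-e_i}$ is a tautology of the bookkeeping at Line~\ref{enum:Tnext} (and Line~\ref{enum:42} for $i=0$): by definition $T_{i+1}=T_i\cdot T_{i+1}'$, hence $p^{e_{i+1}}=p^{e_i}\cdot T_{i+1}'$. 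The specification of Algorithm~\ref{algo:largestFactor} (its fourth output is $p^\ell=p^{\epsilon-\epsilon_1}$, i.e.\ the third input divided by the second output) then gives $S_{i+1}=S_i/T_{i+1}'$, and with $S_i=p^{e-e_i}$ by induction you get $S_{i+1}=p^{e-e_{i+1}}$. That is the entire content of the paper's proof.

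Your third paragraph --- invoking Lemma~\ref{lem:B1}, the last non-nilpotent criterion, and the existence/uniqueness of the precision sequence from~\cite{Da17} to argue that ``one iteration strips off exactly the block $\calI_{i+1}$'' --- is the substance of Theorem~\ref{thm:gcdChain}, not of this lemma. By treating the $e_i$ as the gcd-chain precisions you are effectively trying to prove Lemma~\ref{lem:in-out} and (a large part of) Theorem~\ref{thm:gcdChain} simultaneously, which is circular since the paper uses Lemma~\ref{lem:in-out} as a preliminary for the theorem. Drop the semantic layer: for this lemma you only need Lines~\ref{enum:33},~\ref{enum:42},~\ref{enum:Tnext} and the input/output specification of ``largestFactor''.
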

\begin{proof}
The proof goes by induction on $i=0,\ldots,s-1$. 
According to the specifications of Algorithm~\ref{algo:largestFactor}
``largestFactor'' (Line~\ref{enum:iterate}) proved in Lemma~\ref{lem:correct}.
we have when $i=0$, $T'_1 = p^{e_1}$, $S_1=p^{e-e_1}$. Note that the case $i=0$
(Line~\ref{enum:if0}) does not require Hensel lifting, and thus $T_1 = T_{1}' = p^{e_1-e_0} = p^{e_1}$,
as required.

By induction hypothesis we can assume that the result holds up to $i < s-1$.
There is an $i+1$-th call to Algorithm~\ref{algo:largestFactor}
``largestFactor'' (at Line~\ref{enum:iterate}), with input $S_{i} = p^{e-e_i}$, by induction hypothesis.
From the Specification of Algorithm~\ref{algo:largestFactor},
the output is $T'_{i+1} = p^{e_{i+1}- e_i}$ because the input $S_i$ is only at precision
$e - e_i$;
And $S_{i+1} = p^{e-e_i - (e_{i+1}-e_i)} = p^{e - e_{i+1}}$ as required.
The definition of $T_{i+1}$ at Line~\ref{enum:Tnext} gives
$T_{i+1} = T_{i+1}' \cdot T_i = p^{e_{i+1}-e_{i}} p^{e_i} = p^{e_{i+1}}$ as required.
\hfill $\Box$ \end{proof}

The proposition below connects the different outputs obtained after each iteration,
to the initial input $a,b,T$. It is crucial in the proof of Theorem~\ref{thm:gcdChain}

\begin{proposition}\label{prop:prelim}
With the notation of Lemma~\ref{lem:in-out},
we have
$$
\text{for~} i\ge 1,\qquad   p^{e_{i-1}} g_i \in \l a,\ b,\  p^e \r, \qquad p^{e_{i}} B_i \in \l a,\ b,\ p^e\r.
$$
\end{proposition}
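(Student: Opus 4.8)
The plan is to prove both containments simultaneously by induction on $i$, using the fact that each iteration of Algorithm~\ref{algo:largestFactor} only produces $g_{i+1}$ and $B_{i+1}$ from $g_i$ and $B_i$ through subresultant cofactor identities and divisions by powers of $p$, and that the Hensel lifting step (Lemma~\ref{lem:HL}) only perturbs things modulo higher powers of $p$. First I would set up the base case $i=1$: here $g_1$ and $B_1$ come directly from $\LargestFactor(a,b,T)$ (since $g_0=a$, $B_0=b$). By Lemma~\ref{lem:correct} we have $\l g_1,p^{e_1}\r=\l a,b,p^{e_1}\r$, so $g_1\in\l a,b,p^{e_1}\r\subseteq\l a,b,p^e\r$ (as $e_1\le e$), which gives the first statement with $e_0=0$. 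For $B_1$: inside Algorithm~\ref{algo:largestFactor}, $B_1$ is $\WeierstrassMonic$ applied to $S=S_{r_{j+1}}/p^{e_1}$, and $S_{r_{j+1}}=u_{j+1}a+v_{j+1}b\in\l a,b\r$; since $\WeierstrassMonic$ multiplies by a unit modulo $p^{e-e_1}$, we get $p^{e_1}B_1\in\l a,b,p^e\r$ after clearing the unit and the division, which is exactly the claim $p^{e_1}B_1\in\l a,b,p^e\r$.

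Next I would do the inductive step, assuming $p^{e_{i-1}}g_i\in\l a,b,p^e\r$ and $p^{e_i}B_i\in\l a,b,p^e\r$. The call $\LargestFactor(g_i,B_i,S_i)$ with $S_i=p^{e-e_i}$ produces $g_{i+1}$ with $\l g_{i+1},p^{e_{i+1}-e_i}\r=\l g_i,B_i,p^{e_{i+1}-e_i}\r$ by Lemma~\ref{lem:correct} — but careful, this is an identity of ideals in $k[x]/\l p^{e-e_i}\r[y]$, i.e. modulo $p^{e-e_i}$. So there exist polynomials with $g_{i+1}\equiv \lambda g_i+\mu B_i\bmod p^{e_{i+1}-e_i}$ (up to also a $p^{e_{i+1}-e_i}$-multiple). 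Multiplying through by $p^{e_i}$: $p^{e_i}g_{i+1}\equiv \lambda p^{e_i}g_i+\mu p^{e_i}B_i\bmod p^{e_{i+1}}$; and then multiplying by a further $p$ is not needed — rather I need to go from modulo $p^{e_{i+1}}$ up to modulo $p^e$, which costs a factor $p^{e_{i-1}}$ somewhere. This is where I must be precise: $p^{e_i}g_i$ and $p^{e_i}B_i$: from the induction hypothesis $p^{e_{i-1}}g_i\in\l a,b,p^e\r$, so $p^{e_i}g_i=p^{e_i-e_{i-1}}\cdot(p^{e_{i-1}}g_i)\in\l a,b,p^e\r$, and $p^{e_i}B_i\in\l a,b,p^e\r$ directly by hypothesis. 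So from $p^{e_i}g_{i+1}\equiv\lambda p^{e_i}g_i+\mu p^{e_i}B_i\bmod p^{e_{i+1}}$ I get $p^{e_i}g_{i+1}\in\l a,b,p^{e_{i+1}}\r$; I then need the exponent to absorb the gap up to $p^e$, but since $\l a,b,p^{e_{i+1}}\r\supseteq\l a,b,p^e\r$ the wrong way, I instead observe that the congruence holds modulo $p^{e_{i+1}}$ means $p^{e_i}g_{i+1}-(\text{elt of }\l a,b\r)\in\l p^{e_{i+1}}\r$, hence $p^{e_i}g_{i+1}\in\l a,b,p^{e_{i+1}}\r\subseteq\l a,b,p^{e_i}\r$... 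I should double-check the target: the statement wants $p^{e_{(i+1)-1}}g_{i+1}=p^{e_i}g_{i+1}\in\l a,b,p^e\r$, so I genuinely need to lift the modulus from $p^{e_{i+1}}$ to $p^e$, and this is precisely what the Hensel lifting at Line~\ref{enum:HL} accomplishes — $g_{i+1}$ is replaced by $g_{i+1}^\star\bmod p^{e_{i+1}}$, but the relation $g_i\equiv G_i^\star g_{i+1}^\star\bmod p^{e_{i+1}}$ keeps things consistent; I would carry through the cofactor bookkeeping here.

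For $B_{i+1}$, the same mechanism applies one level deeper: inside $\LargestFactor(g_i,B_i,S_i)$, $B_{i+1}$ is $\WeierstrassMonic$ of $S_{r_{j+1}}/p^{e_{i+1}-e_i}$ where $S_{r_{j+1}}=u_{j+1}g_i+v_{j+1}B_i\in\l g_i,B_i\r$ (cofactor identity of the subresultant chain of $g_i,B_i$). So $p^{e_{i+1}-e_i}B_{i+1}$ equals a unit (mod $p^{e-e_{i+1}}$) times an element of $\l g_i,B_i\r$; multiplying by $p^{e_i}$ and clearing the unit, $p^{e_{i+1}}B_{i+1}\in\l p^{e_i}g_i,\ p^{e_i}B_i,\ p^e\r\subseteq\l a,b,p^e\r$ by what was just established for $p^{e_i}g_i$ and the hypothesis for $p^{e_i}B_i$. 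The Hensel lifting does not touch $B_{i+1}$, so this is clean. The main obstacle I anticipate is the modulus-raising bookkeeping in the $g_{i+1}$ case: Lemma~\ref{lem:correct} only gives an ideal equality modulo $p^{e_{i+1}-e_i}$ (equivalently, after multiplying by $p^{e_i}$, modulo $p^{e_{i+1}}$), whereas the conclusion needs membership in an ideal with modulus $p^e$; reconciling these requires using that the Hensel-lifted $g_{i+1}^\star$ satisfies $g_i\equiv G_i^\star g_{i+1}^\star\bmod p^{e_{i+1}}$ together with the induction hypothesis on $p^{e_{i-1}}g_i$, and chasing exactly which power of $p$ multiplies each generator — the factor $p^{e_{i-1}}$ (rather than $p^{e_i}$) in the statement is the telltale sign that one saves one block's worth of precision, and getting that constant right is the delicate point.
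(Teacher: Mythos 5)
Your base case already goes wrong on the inclusion of ideals. You write $\l a,b,p^{e_1}\r \subseteq \l a,b,p^e\r$ because $e_1\le e$, but the inclusion runs the other way: $e_1\le e$ means $p^{e_1}\mid p^e$, hence $\l p^e\r\subseteq\l p^{e_1}\r$ and therefore $\l a,b,p^e\r \subseteq \l a,b,p^{e_1}\r$. So knowing $g_1\in\l a,b,p^{e_1}\r$ does not give $g_1\in\l a,b,p^e\r$. The paper does not route through Lemma~\ref{lem:correct} at all; it uses the B\'ezout cofactor identity of the subresultant chain of $a,b$ computed modulo $T=p^e$: there is a unit $\nu_1$ in $(k[x]/\l p^e\r)[y]$ and cofactors $u_j,v_j$ such that $\nu_1 g_1 = S_{r_j} = u_j a + v_j b$ modulo $p^e$, which gives $g_1\in\l a,b,p^e\r$ directly, with the correct modulus from the start.

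The same confusion is what gets you stuck in the inductive step for $g_{i+1}$. You try to feed Lemma~\ref{lem:correct} to get an ideal equality and then find yourself working modulo $p^{e_{i+1}-e_i}$ (or $p^{e_{i+1}}$ after multiplying), which is indeed too weak. But notice you already do the right thing for $B_{i+1}$: there you use the B\'ezout identity $S_{r_{j+1}} = u_{j+1}g_i + v_{j+1}B_i$, which holds modulo $S_i=p^{e-e_i}$ (the modulus in force during the $i$-th call to $\LargestFactor$), and Weierstrass gives $p^{e_{i+1}-e_i}\epsilon_{i+1}B_{i+1}=S_{r_{j+1}}$ modulo $p^{e-e_i}$; multiplying by $p^{e_i}$ yields a congruence modulo $p^e$ on the nose. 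The exact same argument works for $g_{i+1}$: $\nu_{i+1}g_{i+1}=S_{r_j}=u_j g_i + v_j B_i$ modulo $p^{e-e_i}$, so $p^{e_i}g_{i+1}=\nu_{i+1}^{-1}(u_j\,p^{e_i}g_i + v_j\,p^{e_i}B_i)$ modulo $p^e$, and both $p^{e_i}g_i = p^{e_i-e_{i-1}}\cdot p^{e_{i-1}}g_i$ and $p^{e_i}B_i$ lie in $\l a,b,p^e\r$ by the induction hypothesis. There is no modulus gap to close, and consequently Hensel lifting plays no role in this proof; your appeal to it is a red herring. The ``telltale'' factor $p^{e_{i-1}}$ in the statement comes not from any lifting trick but simply from the fact that the $i$-th call works over $k[x]/\l p^{e-e_i}\r$ and the induction hypothesis only gives $p^{e_{i-1}}g_i\in\l a,b,p^e\r$, one step behind.
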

\begin{proof}
We proceed by induction on $i$, starting with the base case $i=1$.

By definition $g_1$ is the first output polynomial of the first call to Algorithm~\ref{algo:largestFactor}
``$\LargestFactor$'' in Algorithm~\ref{algo:gcdChain} ``gcdChain''.
From the definition of the Weierstrass factorization at Line~\ref{enum:WeierA}
of Algorithm~\ref{algo:largestFactor}, there is a unit $\nu_1 \in ( k[x]/\l p^e\r )[y]$ such that
$\nu _1 g_1 = S_{r_j}$. Write the B\'ezout coefficients $u^{(1)}_j a + v^{(1)}_j b = S_{r_j}$,
so that $g_1 = \nu_1^{-1} (  u^{(1)}_j a + v^{(1)}_j b)$; this proves
that $g_1 \in \l a,\ b,\ T\r$ (and that $g_1 \in \l a,\ b,\ p^e\r$: indeed $p^e=T$).

$B_1$ is the third  output polynomial of the first call to Algorithm~\ref{algo:largestFactor}
``$\LargestFactor$'' in Algorithm~\ref{algo:gcdChain} ``gcdChain''.
From Line~\ref{enum:WeierB} of Algorithm~\ref{algo:largestFactor} and by definition of the Weierstrass factorization,
there is a unit $\epsilon_1 \in (k[x]/\l p^{e-e_1}\r)[y]$ such that $p^{e_1} \epsilon_1 B_1 = S_{r_{j+1}}$.
With the B\'ezout coefficients written $S_{r_{j+1}} = u^{(1)}_{j+1} a + v^{(1)}_{j+1} b$, we obtain
$p^{e_1} B_1 = \epsilon_1^{-1}(u^{(1)}_{j+1} a + v^{(1)}_{j+1} b)\in 
\l a, b, p^e\r$. 

Next assume that the result holds up to a value $1\le i<s$ and let us prove it for $i+1$.
By definition $g_i$ (resp. $B_i$) is the first (resp. the third) output polynomial of the $i$-th call to Algorithm~\ref{algo:largestFactor}
``$\LargestFactor$'' in Algorithm~\ref{algo:gcdChain} ``gcdChain''.
From the definition of the Weierstrass factorization at
Line~\ref{enum:WeierA} (resp. at Line~\ref{enum:WeierB})
of Algorithm~\ref{algo:largestFactor}, there is a unit $\nu_{i+1} \in ( k[x]/\l p^{e-e_i}\r )[y]$  (resp. $\epsilon_{i+1}$) such that
$ g_{i+1} \nu_{i+1}= S_{r_j} $ (resp. $p^{e_{i+1}-e_i} \epsilon_{i+1} B_{i+1} = S_{r_{j+1}}$).
Write the B\'ezout coefficients as follows:
$$
S_{r_j} = u_{j}^{(i+1)} g_i + v_j^{(i+1)} B_i,\quad S_{r_{j+1}} = u_{j+1}^{(i+1)} g_i + v_{j+1}^{(i+1)} B_i. 
$$
It follows that: 
$$
p^{e_i} g_{i+1}  \stackrel{\small (\bullet)}{=} \nu_{i+1}^{-1} ( u_{j}^{(i+1)} p^{e_i} g_i + v_j^{(i+1)} p^{e_i} B_i) ,\qquad 
p^{e_{i+1} - e_i} B_{i+1} = \epsilon_{i+1}^{-1} (u_{j+1}^{(i+1)} g_i + v_{j+1}^{(i+1)} B_i). 
$$ 
The latter equality implies $p^{e_{i+1}} B_{i+1} \stackrel{\small (\star)}{=} \epsilon_{i+1}^{-1} (u_{j+1}^{(i+1)} p^{e_i} g_i + v_{j+1}^{(i+1)} p^{e_i} B_i)$.
By induction hypothesis $p^{e_{i-1}} g_i \in \l a,b,p^e\r$ and $p^{e_i} B_i \in \l a,b,p^e\r$.
Thus $p^{e_i} g_i = p^{e_i - e_{i-1}} p ^{e_{i-1}} g_i \in \l a , b, p^e\r $. 
Consequently, both  the r.h.s of Eqs.~$(\star)$ and $(\bullet)$
are in $\l a,b,p^e\r$, 
therefore so are $p^{e_i} g_{i+1} $ and $p^{e_{i+1}} B_{i+1}$. 
\hfill $\Box$ \end{proof}

The first iteration of the loop of Line~\ref{enum:while} is special
since it doesn't require Hensel Lifting (case $i=0$). The lemma hereunder
treats this base case apart, and used  in the induction proof of Theorem~\ref{thm:gcdChain}.

\begin{Lem}\label{lem:firstblock}
The first iteration of the while loop (Line~\ref{enum:while}, $i=0$)
of Algorithm~\ref{algo:gcdChain}
fills $\calC$ with $g_1$ and $\Tree$ with $T_1'=p^{e_1}$.
We have $\l g_1,p^{e_1}\r = \l a, b,p^{e_1} \r$.

Moreover, if there is only one iteration the stronger equality
 $\l g_1 , p^{e_1}\r = \l a, b , p^{e} \r$ holds.
\end{Lem}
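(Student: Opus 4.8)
The plan is to unwind the first iteration of the while loop in Algorithm~\ref{algo:gcdChain} (the case $i=0$) line by line, and feed the relevant outputs into Lemma~\ref{lem:correct} and (for the ``moreover'' part) into the ideal-equality promised on exit at Line~\ref{enum:returnzero} of Algorithm~\ref{algo:largestFactor}. First I would observe that on entry we have $g_0=a$, $B_0=b$, $S_0=T$, so the call to $\LargestFactor$ at Line~\ref{enum:iterate} is $\LargestFactor(a,b,T)$, which returns $g_1,T_1',B_1,S_1$. Since the test at Line~\ref{enum:if0} is satisfied (we are in the case $i=0$), the only bookkeeping performed is $\Tree\leftarrow[T_1']$ at Line~\ref{enum:42} and $\calC\leftarrow[g_1]$ at Line~\ref{enum:52}; in particular $\calD$ stays empty and no Hensel lifting is triggered. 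By Lemma~\ref{lem:in-out} (case $i=0$) we have $T_1'=p^{e_1}$, so indeed $\Tree$ is filled with $p^{e_1}$ and $\calC$ with $g_1$, which is the first assertion.

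Next, for the ideal equality $\l g_1,p^{e_1}\r=\l a,b,p^{e_1}\r$, I would simply invoke Lemma~\ref{lem:correct}: it states precisely that the output of Algorithm~\ref{algo:largestFactor} run on inputs $(f,A,T)$ satisfies $\l f,A,p^{e_1}\r=\l g,p^{e_1}\r$, where $g$ is the first output component and $p^{e_1}$ the second. Applying this with $f=a$, $A=b$ (and $g=g_1$) gives $\l g_1,p^{e_1}\r=\l a,b,p^{e_1}\r$ directly. One small point to check here is that the case $a=0$ or $\deg(b)=0$ is also covered: Lemma~\ref{lem:correct} handles the early exits at Lines~\ref{enum:enda} and~\ref{enum:endb} as well, where $p^{e_1}=T$ and the equality still reads $\l g_1,T\r=\l a,b,T\r$; so no generality is lost.

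For the ``moreover'' part, suppose there is only one iteration of the while loop. Then after the first iteration the loop test $B_1\neq$``$end$''~and~$B_1\neq 1$ must fail, i.e.\ $B_1=$``$end$'' or $B_1=1$. I would argue that in either subcase $\LargestFactor(a,b,T)$ must have exited at one of Lines~\ref{enum:enda}, \ref{enum:endb}, or~\ref{enum:returnzero}: these are exactly the exits that return ``$end$'' (or $A=1$ as at Line~\ref{enum:endb}, where $B$ is again ``$end$'')---the exit at Line~\ref{enum:return0} always returns a genuine monic polynomial $B\in k[y]$ with $\deg_y(B)\le\deg_y(g_1)$, and tracing Algorithm~\ref{algo:largestFactor} one sees $B$ there is never forced to equal $1$ while also terminating the chain, so a single-iteration run cannot pass through Line~\ref{enum:return0}. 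In each of the three terminating exits, inspection of the proof of Lemma~\ref{lem:correct} shows the \emph{stronger} statement $\l g_1,p^{e_1}\r=\l a,b,T\r=\l a,b,p^e\r$: at Line~\ref{enum:enda} $\l a,b,T\r=\l a,T\r=\l g_1,T\r$ with $p^{e_1}=T$; at Line~\ref{enum:endb} $\l a,b,T\r=\l 1\r=\l g_1,T\r$; and at Line~\ref{enum:returnzero} (the case $e_1=e$) classical subresultant theory gives $\l S_{r_j}\r=\l a,b\r$ mod $T$, hence $\l g_1\r=\l S_{r_j}\r$ mod $T$ by the Weierstrass specification, so $\l g_1,T\r=\l a,b,T\r$ with again $p^{e_1}=T=p^e$.

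I expect the main obstacle to be the ``moreover'' part---specifically, justifying cleanly that a single iteration forces $\LargestFactor$ to have exited at one of the three ``$end$''-returning lines (equivalently, that a pass through Line~\ref{enum:return0} of Algorithm~\ref{algo:largestFactor} necessarily produces a nontrivial $B$ that keeps the while loop running). This is really a statement about \emph{when} $\LargestFactor$ returns ``$end$'' versus a genuine continuation polynomial, and it hinges on the fact that at Line~\ref{enum:return0} the returned $B$ is monic with $1\le\deg_y(B)$ (so $B\neq 1$) whenever we did not already exit earlier; this needs a short argument from the structure of the subresultant chain (e.g.\ that $S_{r_{j+1}}$, after dividing out $p^{e_1}$, has positive $y$-degree), but it is not a deep calculation. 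Everything else is bookkeeping plus direct citation of Lemmas~\ref{lem:correct} and~\ref{lem:in-out}.
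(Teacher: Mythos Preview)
Your treatment of the first assertion and of the equality $\l g_1,p^{e_1}\r=\l a,b,p^{e_1}\r$ is exactly the paper's: unwind the first loop, identify the inputs $g_0=a$, $B_0=b$, $S_0=T$, and invoke Lemma~\ref{lem:correct} (together with Lemma~\ref{lem:in-out} for $T_1'=p^{e_1}$). Nothing to add there.

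The gap is in the ``moreover'' part. You assert that a single iteration forces $\LargestFactor(a,b,T)$ to have exited at one of Lines~\ref{enum:enda}, \ref{enum:endb}, \ref{enum:returnzero}, and you plan to justify this via ``a short argument \ldots\ that $S_{r_{j+1}}$, after dividing out $p^{e_1}$, has positive $y$-degree''. That claim is false. The first nilpotent subresultant $S_{r_{j+1}}$ may perfectly well have $y$-degree $0$ (the resultant itself can be nilpotent but nonzero modulo $T$), or more generally $S=S_{r_{j+1}}/p^{e_1}$ may have its highest-degree invertible coefficient at index $i=0$; then $\WeierstrassMonic$ at Line~\ref{enum:WeierB} returns the monic polynomial of degree $0$, namely $B_1=1$, and the algorithm exits at Line~\ref{enum:return0} with $e_1<e$. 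This is precisely what happens in the paper's running example at the second call (where $S=3x-1$), and nothing prevents it from happening at the first call. In that scenario your conclusion ``$p^{e_1}=T$'' fails, and the argument collapses.

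The paper handles the subcase $B_1\neq\text{``}end\text{''}$, $B_1=1$ by a different mechanism: it invokes Proposition~\ref{prop:prelim}, which gives $g_1=p^{e_0}g_1\in\l a,b,p^e\r$ and $p^{e_1}=p^{e_1}B_1\in\l a,b,p^e\r$, hence $\l g_1,p^{e_1}\r\subset\l a,b,p^e\r$; the reverse inclusion is immediate from the already-established $\l g_1,p^{e_1}\r=\l a,b,p^{e_1}\r\supset\l a,b,p^e\r$. So the missing ingredient in your sketch is exactly Proposition~\ref{prop:prelim} (the B\'ezout-coefficient tracking that shows $p^{e_1}B_1\in\l a,b,p^e\r$), not a degree bound on $B$.
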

\begin{proof}
From the proof of correctness of Algorithm~\ref{algo:largestFactor}
one has at Line~\ref{enum:iterate} of Algorithm~\ref{algo:gcdChain}
$\l g_1, T'_1\r = \l g_0 ,B_0 ,T_1'\r$.
But $T_1'=p^{e_1}$, $g_0=a$ and $B_0=b$  yielding the first equality.

For the second one,
by Proposition~\ref{prop:prelim} we know that $p^{e_0} g_1=g_1\in \l a, b, p^e\r$ 
and that $p^{e_1} B_1 \in \l a,b,p^{e}\r$.
If there is only one iteration
of the while loop Line~\ref{enum:while},
namely for $i=0$, then there are two sub-cases.
First $B_1\not=$``$end$'' and $B_1=1$.
Then Proposition~\ref{prop:prelim} gives $p^{e_1} B_1 = p^{e_1} \in \l a,b,p^{e}\r$
which proves $\l p^{e_1} , g_1\r  = \l a,b,p^e\r$.
Second, $B_1=$``$end$''.
This happens when the call
to Algorithm ``largestFactor'' (Line~\ref{enum:iterate})
with input $a,b,T$
outputs $B_{i+1}=B_1=$``$end$''.
According to Lines~\ref{enum:enda},\ \ref{enum:endb},\ \ref{enum:returnzero}
the output denoted $p^{e_1}$ (in this Lemma)
is equal to $T$ (as denoted in Algorithm~\ref{algo:largestFactor})
which is equal to $p^e$.
This implies $\l g_1 ,p^{e_1}\r=\l g_1,p^e\r=\l a,b,p^e\r$.
\hfill $\Box$ \end{proof}

To prove correctness of the ``gcdChain'' Algorithm~\ref{algo:gcdChain},
one must consider the isomorphisms~\eqref{eq:isochain} and~\eqref{eq:block}.
This amounts to prove that a product of ideals of type
$\l G_1,p^{e_1}\r \l G_2, p^{e_2}\r \cdots \l g_i,p^{e_i}\r$ is equal 
to ideals of type $\l a,b,p^{e_i}\r$. Thus we must examine the generators of the
product of ideals, which is the purpose of the following corollary (of Proposition~\ref{prop:prelim}).

\begin{corollary}\label{cor:prelim}
The same notations as in Lemma~\ref{lem:in-out} are used.
Let $\calI$ be a subset of indices in $\jump{1,s-1}$ and let $\bar{\calI}$ its complement
in $\jump{1,s}$ (so that $\calI,\bar{\calI}$ is a partition $\jump{1,s}$).

Then $g_s \prod_{i\in \calI} p^{e_i} \prod_{j\in \bar{\calI}} G_j \in \l a,b,p^e\r$. 
\end{corollary}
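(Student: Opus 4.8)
The plan is to prove the membership by a downward induction on the size of $\calI$, using Proposition~\ref{prop:prelim} together with the defining factorization relations $g_i \equiv G_i\, g_{i+1} \bmod \l T_{i+1}\r$ coming from Lines~\ref{enum:Gi} and~\ref{enum:HL} of Algorithm~\ref{algo:gcdChain}. The base case is $\calI = \emptyset$, i.e.\ $\bar\calI = \jump{1,s}$; here the claimed statement is $g_s \prod_{j=1}^{s} G_j \in \l a,b,p^e\r$, but since $g_s$ appears and $G_s$ is \emph{not} in $\calD$, one should read $\bar\calI \cap \jump{1,s-1}$ for the product over $G_j$; the genuine base case is that $g_s \prod_{j=1}^{s-1} G_j$ lies in the ideal. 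To see this, note that iterating $g_i \equiv G_i g_{i+1}\bmod\l T_{i+1}\r$ (and using that $T_{i+1}=p^{e_{i+1}}$ divides $p^e$) telescopes to $g_1 \equiv \bigl(\prod_{i=1}^{s-1} G_i\bigr) g_s \bmod \l p^{e}\r$, so $g_s\prod_{i=1}^{s-1}G_i \equiv g_1 \bmod \l p^e\r$, and $g_1 \in \l a,b,p^e\r$ by the $i=1$ case of Proposition~\ref{prop:prelim} (which gives $p^{e_0}g_1 = g_1 \in \l a,b,p^e\r$).

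For the inductive step, suppose the statement holds for all index sets of size $< |\calI|$ and pick the largest index $k \in \calI$. Write $\calI = \calI' \sqcup \{k\}$ with $\calI' \subseteq \jump{1,k-1}$, and observe that all indices in $\bar\calI$ that are $\ge k$ are in fact $\ge k+1$ (since $k\in\calI$), while $k\in\bar{\calI'}$. The idea is to compare the target element
\[
E_\calI \;=\; g_s \prod_{i\in\calI} p^{e_i}\prod_{j\in\bar\calI}G_j
\]
with $E_{\calI'} = g_s \prod_{i\in\calI'}p^{e_i}\prod_{j\in\bar{\calI'}}G_j$. These differ by replacing the factor $G_k$ (present in $E_{\calI'}$) with the factor $p^{e_k}$ (present in $E_\calI$); all other factors agree. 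Using the telescoping identity restricted to indices $\ge k$, one has $\bigl(\prod_{j\ge k,\ j\le s-1} G_j\bigr) g_s \equiv g_k \bmod \l p^e\r$, so $E_{\calI'} \equiv \bigl(\prod_{i\in\calI'}p^{e_i}\bigr)\bigl(\prod_{j\in\bar{\calI'},\,j<k}G_j\bigr)\,g_k \bmod \l p^e\r$ and likewise $E_\calI \equiv \bigl(\prod_{i\in\calI'}p^{e_i}\bigr)\bigl(\prod_{j\in\bar{\calI'},\,j<k}G_j\bigr)\,p^{e_k}\,g_{k+1}\cdot(\text{product of }G_j,\ k+1\le j\le s-1)\cdot g_s /g_s$, which after the same telescoping below $k$ reduces to comparing $p^{e_k} g_{k+1}\cdots$ with $g_k\cdots$. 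The clean way to phrase this: by the induction hypothesis $E_{\calI'}\in\l a,b,p^e\r$, and $E_\calI$ is obtained from a common "head" factor $H := \prod_{i\in\calI'}p^{e_i}\prod_{j\in\bar{\calI'},\,j<k}G_j$ multiplied into $p^{e_k}g_{k+1}\prod_{k+1\le j\le s-1}G_j\, g_s/(\cdots)$; but by Proposition~\ref{prop:prelim}, $p^{e_k}B_k \in \l a,b,p^e\r$, and one identifies (via the telescoping of the $g$–$G$ relations starting from $g_{k+1}$) that $g_{k+1}\prod_{k+1\le j\le s-1}G_j \equiv g_s\cdot(\text{unit-free})$, hence $p^{e_k}\cdot(\text{that product}) \equiv p^{e_k}g_{k+1}\cdot(\cdots)$; and $B_k$ is, up to the unit $\epsilon_k$ and the factor $p^{e_k - e_{k-1}}$ analysis in Lemma~\ref{lem:B1} and Proposition~\ref{prop:prelim}, exactly the "remaining quotient" $g_k/g_{k+1}$-type object divided by the appropriate power of $p$. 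Multiplying the ideal membership $p^{e_k}B_k\in\l a,b,p^e\r$ by the polynomial $H$ keeps us in the ideal, yielding $E_\calI \in \l a,b,p^e\r$.

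The main obstacle I anticipate is making the bookkeeping between the three families $g_i$, $G_i$, $B_i$ precise: one must track that $G_i^\star \equiv G_i \bmod \l T_{i+1}'\r$ and $g_i \equiv G_i^\star g_{i+1}^\star \bmod \l T_{i+1}\r$ (Lemma~\ref{lem:HL}), that $T_{i+1}=p^{e_{i+1}}$ with $e_{i+1}\le e$ (Lemma~\ref{lem:in-out}), and that the relation $p^{e_{i}}B_i\in\l a,b,p^e\r$ of Proposition~\ref{prop:prelim} is exactly what supplies the factor $p^{e_k}$ we need, because $B_k$ "is" $g_k/g_{k+1}$ at the relevant precision. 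Since every one of these congruences holds only modulo some power $p^{m}$ with $m\le e$, one must check at each multiplication step that the modulus does not drop below what is required, which it does not because all the $e_i$ involved are $\le e$ and multiplying a congruence mod $p^e$ by an arbitrary polynomial of $k[x,y]$ preserves it. Once the indexing is set up carefully, the argument is a straightforward telescoping induction feeding off Proposition~\ref{prop:prelim}; no genuinely new idea beyond that proposition is needed.
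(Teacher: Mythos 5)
Your proof has two genuine gaps, both of which the paper's argument avoids. First, the telescoping in your base case overstates the modulus. The relation available after Hensel lifting is $g_i \equiv G_i\, g_{i+1} \bmod p^{e_i}$ — \emph{not} $\bmod\, T_{i+1}=p^{e_{i+1}}$ — because the stored $G_i$ is $G_i^\star \bmod T_i$ (Line~\ref{enum:block}), so replacing $G_i^\star$ by $G_i$ costs a factor $p^{e_i}$. Telescoping these congruences therefore only gives $\bigl(\prod_{j=1}^{s-1}G_j\bigr)g_s \equiv g_1 \bmod p^{e_1}$, the weakest modulus, and your claimed congruence $\bmod\, p^e$ is simply false. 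One cannot pass from $g_1 \in \l a,b,p^e\r$ to the desired membership by a mod-$p^e$ congruence; one must expand the product into $g_1 + p^{e_1}g_2(\cdots) + \cdots + p^{e_{s-1}}g_s(\cdots)$ and show each error term is \emph{itself} in the ideal via Proposition~\ref{prop:prelim}, which is exactly what the paper does.

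Second, your inductive step rests on the claim that $B_k$ ``is'' the quotient $g_k/g_{k+1}$ at the relevant precision, and then multiplies the membership $p^{e_k}B_k\in\l a,b,p^e\r$ by a head factor. This identification is not established and is not true: $B_k$ is the Weierstrass-monic form of $S_{r_{j+1}}/p^{e_k-e_{k-1}}$ produced inside $\LargestFactor$ (a descendant of the first nilpotent subresultant), whereas $G_k$ at Line~\ref{enum:Gi} is the Hensel-lifted cofactor of $g_{k+1}$ in $g_k$; these are distinct objects serving distinct roles. What Proposition~\ref{prop:prelim} supplies that actually carries the argument is the \emph{other} half, $p^{e_k}g_{k+1}\in\l a,b,p^e\r$; and once that is used, the induction on $|\calI|$ becomes unnecessary — the paper instead sets $\varsigma:=\max\calI$, observes that the target element lies in the principal ideal generated by $p^{e_\varsigma}g_s\prod_{j=\varsigma+1}^{s-1}G_j$, telescopes that single product, and applies the proposition term by term. (Your side remark that $\prod_{j\in\bar\calI}G_j$ should be read over $j\le s-1$ is correct and matches how the paper's proof actually computes.)
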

\begin{proof}
Assume that $\calI\not=\emptyset$ first and let $\varsigma:=\max \calI$.
Then $\{ \varsigma+1,\ldots,s-1\} \subset \bar{\calI}$ (possibly empty). 
We claim that $p^{e_{\varsigma}} g_s \prod_{j=\varsigma+1}^{s-1} G_j \in \l a,b,p^e\r$, from which  the
conclusion follows since $\l  g_s \prod_{i\in \calI} p^{e_i} \prod_{j\in \bar{\calI}} G_j   \r \subset \l p^{e_{\varsigma}} g_s \prod_{j=\varsigma+1}^{s-1} G_j\r$.

From the definition of $G_i$ and $g_i$ after Hensel Lifting at Lines~\ref{enum:block} and~\ref{enum:Anext}
of Algorithm~\ref{algo:gcdChain} one has: $g_{i-1} = G_{i-1} g_i \bmod p^{e_{i-1}}$ (see Lemma~\ref{lem:HL}).
It implies the following equality where $(\cdots)$ denotes a polynomial:
$$
\begin{array}{rcl}
\left(\prod_{j=\varsigma+1}^{s-1} G_j\right) g_s   & = &  (\prod_{j=\varsigma+1}^{s-2} G_j) ( g_{s-1} + p^{e_{s-1}} g_s (\cdots) )
=   (\prod_{j=\varsigma+1}^{s-3} G_j) ( g_{s-2} +  p^{e_{s-1}} g_{s-1} (\cdots)  +   p^{e_{s-1}} g_s ( \cdots))
\\
& = &   g_{\varsigma+1} + p^{e_{\varsigma+1}} g_{\varsigma+2} (\cdots) + 
 \cdots +  p^{e_{s-3}} g_{s-2} (\cdots)   +  p^{e_{s-2}} g_{s-1} (\cdots)  +   p^{e_{s-1}} g_s (\cdots))
\end{array}
$$
If we multiply the above equation by $p^{e_\varsigma}$ then
each term is a multiple of $p^{e_\ell} g_{\ell+1}$ for some $\ell=\varsigma,\ldots,s-1$.
Hence, by Proposition~\ref{prop:prelim} each term is in $\l a , b ,p^e\r$ yielding the conclusion in the case $\calI\not=\emptyset$.

Otherwise when $\calI= \emptyset $, the product is equal to $G_1 \cdots G_{s-1} g_s$ and similarly to the above, we get 
$g_1 + p^{e_1} g_2 (\cdots)  + \cdots + p^{e_{s-1}} g_s (\cdots)$, which is in $\l a, b, p^e\r$
according to Proposition~\ref{prop:prelim}.
\hfill $\Box$ \end{proof}

The proof of Theorem~\ref{thm:gcdChain} requires a last intermediate result.
\begin{Lem}\label{lem:inclusion}
With the notations of Lemma~\ref{lem:in-out}
assume that $s\ge 2$. Fix an  $ 1\le i \le s-1$ and refer to  the notations of Eqs~\eqref{eq:a=Gia}-\eqref{eq:Gia},
to define $a' : =  G_{i+1}^{(a)} \cdots G_s^{(a)} \cdot a_{\gamma+1}\cdots a_\alpha $ 
and
$b' : =  G_{i+1}^{(b)} \cdots G_s^{(b)} \cdot b_{\gamma+1}\cdots b_\beta $.
We have:
\begin{enumerate}
\item\label{enum:a'} $\l a , b , p^{e_{i+1}}\r \subset \l G_1 , p^{e_1}\r \cdots \l G_i  , p^{e_i}\r \l a',b', p^{e_{i+1}}\r$.
\item\label{enum:a} $\l a' , b' , p^{e_{i+1}}\r = \l g_{i+1}, p^{e_{i+1}}\r$.
\end{enumerate}
\end{Lem}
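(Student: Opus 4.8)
The plan is to prove the two inclusions (1) and (2) essentially by unwinding the definitions of $a',b'$ in terms of the coprime factorizations~\eqref{eq:a=Gia}--\eqref{eq:Gia}, together with the structural facts already established. For part~\eqref{enum:a'}, I would argue as follows. Recall from~\eqref{eq:a=Gia} that $a = \left(\prod_{k=1}^s G_k^{(a)}\right) a_{\gamma+1}\cdots a_\alpha$, so that $a = G_1^{(a)}\cdots G_i^{(a)} \cdot a'$, and likewise $b = G_1^{(b)}\cdots G_i^{(b)}\cdot b'$. By~\eqref{eq:Gia}, modulo $p^{e_k}$ (hence a fortiori modulo $p^{e_k}$ for any $k\le i$) we have $G_k^{(a)} \equiv c_k^{(a)} G_k$ and $G_k^{(b)} \equiv c_k^{(b)} G_k$, where the $c_k^{(a)},c_k^{(b)}$ are monic. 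The key point is that the product ideal $\l G_1,p^{e_1}\r\cdots\l G_i,p^{e_i}\r$ contains, for any choice of one generator per factor, the product of those generators; in particular it contains $G_1\cdots G_k p^{e_{k+1}}\cdots p^{e_i}$ for each $k$, and since $e_1<\cdots<e_i$ and $e_{i+1}>e_i$, one gets that $a = (G_1^{(a)}\cdots G_i^{(a)})a' \equiv (c_1^{(a)}\cdots c_i^{(a)}) G_1\cdots G_i a' \pmod{p^{e_{i+1}}}$ lies in $\l G_1,p^{e_1}\r\cdots\l G_i,p^{e_i}\r\l a',b',p^{e_{i+1}}\r$; here one has to carefully propagate the congruences $G_k^{(a)}\equiv c_k^{(a)}G_k$ valid only modulo $p^{e_k}$, which is why each replacement of $G_k^{(a)}$ by $c_k^{(a)}G_k$ introduces an error term $p^{e_k}(\cdots)$ that is absorbed into the $k$-th ideal factor $\l G_k,p^{e_k}\r$. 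The same for $b$, and since $p^{e_{i+1}}$ is obviously in the right-hand side, we get the claimed inclusion.

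For part~\eqref{enum:a}, the equality $\l a',b',p^{e_{i+1}}\r = \l g_{i+1},p^{e_{i+1}}\r$, I would use the characterization of $g_{i+1}$ from Definition~\ref{def:gcdChain}: $g_{i+1}$ is the product of all common factors of $a$ and $b$ at precision $\ge e_{i+1}$, i.e.\ (with the notation of~\eqref{eq:a=Gia}) $g_{i+1} \equiv G_{i+1}\cdots G_s \pmod{p^{e_{i+1}}}$ up to the Hensel-lifted representatives used in the algorithm. Now $a' = G_{i+1}^{(a)}\cdots G_s^{(a)}\cdot a_{\gamma+1}\cdots a_\alpha$ and $b'$ analogously; modulo $p^{e_{i+1}}$ (hence modulo $p^{e_k}$ for all $k\ge i+1$, since $e_k\ge e_{i+1}$), each $G_k^{(a)}\equiv c_k^{(a)}G_k$ and $G_k^{(b)}\equiv c_k^{(b)}G_k$ with $c_k^{(a)},c_k^{(b)}$ coprime modulo $p$. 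So modulo $p^{e_{i+1}}$ one has $a' \equiv (c_{i+1}^{(a)}\cdots c_s^{(a)}\, a_{\gamma+1}\cdots a_\alpha)(G_{i+1}\cdots G_s)$ and $b' \equiv (c_{i+1}^{(b)}\cdots c_s^{(b)}\, b_{\gamma+1}\cdots b_\beta)(G_{i+1}\cdots G_s)$; the two cofactors are coprime modulo $p$ (they have no common irreducible factor mod $p$, by the assumption following~\eqref{eq:factorab} on the factors with index $>\gamma$ and the definition of the partition $\calI_1,\ldots,\calI_s$), hence their combined ``gcd at precision $e_{i+1}$'' is exactly $G_{i+1}\cdots G_s \equiv g_{i+1}$, so $\l a',b',p^{e_{i+1}}\r = \l G_{i+1}\cdots G_s, p^{e_{i+1}}\r = \l g_{i+1},p^{e_{i+1}}\r$. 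The inclusion $\supseteq$ comes from $g_{i+1}$ dividing both $a'$ and $b'$ modulo $p^{e_{i+1}}$ (it is a common factor of them at that precision), and $\subseteq$ from Bézout-type combinations of the coprime cofactors, exactly as in the informal discussion before Definition~\ref{def:gcdChain}.

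The main obstacle I anticipate is bookkeeping the precision levels consistently: the congruences $G_k^{(a)}\equiv c_k^{(a)}G_k$ from~\eqref{eq:Gia} hold only modulo $p^{e_k}$, whereas in~\eqref{enum:a'} we work modulo $p^{e_{i+1}}>p^{e_i}\ge p^{e_k}$, so one cannot simply substitute; one must track the error terms and show they land in the appropriate ideal factor $\l G_k,p^{e_k}\r$ of the product. Dually, in~\eqref{enum:a} the congruences used are modulo $p^{e_k}$ with $k\ge i+1$, which are \emph{weaker} than working modulo $p^{e_{i+1}}$ only when $e_k>e_{i+1}$, so there the direction of the inequality works in our favour and the substitution is legitimate modulo $p^{e_{i+1}}$. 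A secondary point of care is that the $g_{i+1}$ appearing in the statement is the algorithm's Hensel-lifted polynomial (Line~\ref{enum:Anext}), so I would invoke Lemma~\ref{lem:HL} and Lemma~\ref{lem:in-out} to identify it with $G_{i+1}\cdots G_s$ modulo $p^{e_{i+1}}$, rather than re-deriving this; once that identification is in place the argument is the ideal-theoretic translation of the factorization picture.
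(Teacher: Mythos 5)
Your argument for part~\eqref{enum:a'} is essentially the paper's: write $a = (G_1^{(a)}\cdots G_i^{(a)})a'$, $b=(G_1^{(b)}\cdots G_i^{(b)})b'$, expand each $G_j^{(a)}$, $G_j^{(b)}$ as $c_j\,G_j + p^{e_j}(\cdots)$ and observe that each factor in the product lands in $\l G_j,p^{e_j}\r$; the bookkeeping of the error terms you worry about is exactly what the paper does. That part is fine.

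Part~\eqref{enum:a}, however, is circular as written. The $g_{i+1}$ appearing in the statement is, per Lemma~\ref{lem:in-out}, the polynomial produced by Algorithm~\ref{algo:gcdChain} at iteration $i+1$; a priori there is no reason it coincides with the abstract ``product of all common factors of $a,b$ at precision $\ge e_{i+1}$'' from Definition~\ref{def:gcdChain} --- establishing precisely that coincidence (Eq.~\eqref{eq:block2} for all $i$) is what Theorem~\ref{thm:gcdChain} is for, and Lemma~\ref{lem:inclusion} is one of the stepping stones toward it. You propose to ``invoke Lemma~\ref{lem:HL} and Lemma~\ref{lem:in-out} to identify $g_{i+1}$ with $G_{i+1}\cdots G_s$ modulo $p^{e_{i+1}}$'', but neither lemma says anything of the sort: Lemma~\ref{lem:HL} only asserts the internal consistency of the Hensel lift ($g_i\equiv G_i^\star g_{i+1}^\star$) and Lemma~\ref{lem:in-out} only identifies the powers $T_i, T_i', S_i$ of $p$. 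So the identification you rely on is exactly the statement you are supposed to be proving, and your argument does not close the loop.

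The paper's actual proof of part~\eqref{enum:a} is different and avoids this trap. It treats the case $i=1$ directly, working with the \emph{intermediate} algorithmic objects $g_1$ and $B_1$ rather than the abstract gcd-chain polynomials. The key ingredients are Lemma~\ref{lem:B1} (a common factor of $f,A$ at precision $r>e_1$ reappears as a factor of $B$ at precision exactly $r-e_1$, and conversely divisors of $B$ below precision $e_2$ are not common factors of $f,A$ below precision $e_2$) and Lemma~\ref{lem:correct} (the divisors of $g_1$ are exactly the common factors of $a,b$ modulo $p$). Combining these shows $\l a',b',p^{e_2-e_1}\r = \l g_1,B_1,p^{e_2-e_1}\r$. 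Then Lemma~\ref{lem:firstblock} is reapplied --- but to the inputs $g_1,B_1,p^{e-e_1}$ in place of $a,b,T$ --- giving $\l g_1,B_1,p^{e_2-e_1}\r = \l g_2,p^{e_2-e_1}\r$, and the Hensel lift at Line~\ref{enum:HL} upgrades this to precision $e_2$, yielding $\l a',b',p^{e_2}\r=\l g_2,p^{e_2}\r$. Finally, the general $i>1$ is reduced to $i=1$ by replacing $a,b,a',b'$ with the quotients $E=a/G_1^{(a)}\cdots G_{i-1}^{(a)}$, $F=b/G_1^{(b)}\cdots G_{i-1}^{(b)}$, $E'=E/G_i^{(a)}$, $F'=F/G_i^{(b)}$. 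If you want to repair your proof, you need an argument along these lines that ties the algorithm's $g_{i+1}$ to $\l a',b'\r$ \emph{without} assuming the gcd-chain characterization of $g_{i+1}$; the precision-loss/Hensel bookkeeping through $B_1$, plus Lemma~\ref{lem:B1}, is the mechanism the paper uses for that.
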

\begin{proof}
By Eqs.\eqref{eq:a=Gia}-\eqref{eq:Gia},
we have $a = a' \prod_{j=1}^i G_j^{(a)} $ (similarly 
$b = b' \prod_{j=1}^i G_j^{(b)} $)
with $G_j^{(a)}\equiv c_j^{(a)} G_j \bmod \l p^{e_j}\r$
(respectively $G_j^{(b)}\equiv c_j^{(b)} G_j \bmod \l p^{e_j}\r$).
Therefore
$$
a = a' \cdot \prod_{j=1}^i ( c_j^{(a)} G_j + p^{e_j} (\cdots) ) ,
\quad 
b = b' \cdot \prod_{j=1}^i ( c_j^{(b)} G_j + p^{e_j} (\cdots) ) ,
$$
where $(\cdots)$ denotes some polynomials.
This proves the inclusion~\ref{enum:a'}.

To prove~\ref{enum:a}, let us treat first the case $i=1$.
Consider the polynomials  $g_1 , B_1$, which are the output of the first call
to ``largestFactor'' at Line~\ref{enum:iterate} of Algorithm~\ref{algo:gcdChain}.
Note that if we apply the lemma~\ref{lem:firstblock} with $g_1 , B_1 , p^{e-e_1}$
instead of $a,b,T$, we obtain the ideal equality: $\l g_1 ,B_1, p^{e_2 - e_1}\r =
\l g_2, p^{e_2-e_1}\r$; Note the precision loss of $e_1$  entailed at the previous iteration $i=0$.
Let us prove  
 prove that $\l a' ,b' , p^{e_2-e_1}\r = \l g_1, B_1,p^{e_2-e_1}\r$.
According to Definition~\ref{def:precision} and to the definitions of the polynomials
$a'$ and $b'$,
the common factors and $a'$ and $b'$ are exactly those of $a$ and $b$ at precision $\ge e_2$.
By Lemma~\ref{lem:B1}, such common factors are divisors
of $B_1$ at precision $\ge e_2 - e_1$. 
On the other hand, by Lemma~\ref{lem:correct}
the divisors of $g_1$ (at precision $p$) are {\em exactly}
the common factors of $a$ and $b$.
Therefore, {\em up to precision},   the common factors of $g_1$ and $B_1$ are {\em exactly} 
those of $a$ and $b$ at precision $\ge e_2$, hence
are the same as those of $a'$ and $b'$.
As for {\em precision},
the common factors at precision $\ge e_2$ of $a,b$,
and those of $g_1,B_1$ (and they are at precision $\ge e_2 -e_1$ due to precision loss).
We deduce that $\l g_2  , p^{e_2- e_1}\r = \l a',b' ,p^{e_2-e_1}\r$.
After Hensel lifting at Line~\ref{enum:HL}, we obtain $\l g_2  , p^{e_2}\r = \l a',b' ,p^{e_2}\r$ .

The general case $i>1$ reduces to the case $i=1$ by considering instead
of $a$ and $b$,  $E := a/G_1^{(a)}\cdots G_{i-1}^{(a)} = a_{\gamma+1}\cdots a_\alpha\cdot \prod_{j=i}^s G_{j}^{(a)}$
and respectively $F := b/G_1^{(b)} \cdots G_{i-1}^{(a)} = b_{\gamma+1}\cdots b_\beta\cdot \prod_{j=i}^s G_{j}^{(b)}$;
And instead of $a'$ and  $b'$ to take
$E':=a' / G_1^{(a)} \cdots G_{i}^{(a)} = a_{\gamma+1}\cdots a_\alpha\cdot \prod_{j=i+1}^s G_{j}^{(a)}$
and respectively $F' := b' / G_1^{(b)}  \cdots G_{i}^{(b)} = b_{\gamma+1}\cdots b_\beta\cdot \prod_{j=i+1}^s G_{j}^{(b)}$.
\hfill $\Box$ \end{proof}

\begin{theorem}\label{thm:gcdChain}
At the end of the  $i$-th iteration
of the while loop (Line~\ref{enum:while}) of Algorithm~\ref{algo:gcdChain},
the output lists $\calC, \Tree, \calD$ are currently filled with:
$$\calC=[ g_1,\ldots,g_i], \quad
\Tree=[p^{e_1},\ldots,p^{e_i}], \quad 
\calD=[G_1,\ldots,G_{i-1}],
$$
and at the end of the algorithm, after say $s$ iterations,
the sequence $[(g_1,p^{e_1}),\ \ldots,\ (g_s,p^{e_s})]$ is a gcd chain.
\end{theorem}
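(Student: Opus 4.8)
The plan is to prove the displayed ``partial filling'' statement by induction on the iteration index $i$, and then deduce that the final lists constitute a gcd chain in the sense of Definition~\ref{def:gcdChain}. For the base case $i=1$, I would invoke Lemma~\ref{lem:firstblock} directly: it already says that the first iteration puts $g_1$ into $\calC$ and $T_1'=p^{e_1}$ into $\Tree$, leaves $\calD$ empty, and gives $\l g_1, p^{e_1}\r = \l a,b,p^{e_1}\r$. So the content of the induction step is to show that if after iteration $i$ we have $\calC=[g_1,\ldots,g_i]$, $\Tree=[p^{e_1},\ldots,p^{e_i}]$, $\calD=[G_1,\ldots,G_{i-1}]$, together with the block isomorphism of the form~\eqref{eq:block} at precision $e_i$, then iteration $i+1$ appends $g_{i+1}$, $p^{e_{i+1}}$, and $G_i$ respectively, and re-establishes~\eqref{eq:block} at precision $e_{i+1}$. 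The bookkeeping of which list gets what is immediate from reading Lines~\ref{enum:42}, \ref{enum:block}, \ref{enum:nextblock}, \ref{enum:52} of Algorithm~\ref{algo:gcdChain}; the mathematical substance is entirely the ideal equalities, and the exponents $e_{i+1}-e_i$, $e-e_{i+1}$, $e_{i+1}$ attached to the outputs $T_{i+1}'$, $S_{i+1}$, $T_{i+1}$ are exactly Lemma~\ref{lem:in-out}.

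For the induction step I would argue as follows. By Lemma~\ref{lem:inclusion}\eqref{enum:a'} we have $\l a,b,p^{e_{i+1}}\r \subset \l G_1,p^{e_1}\r\cdots\l G_i,p^{e_i}\r\l a',b',p^{e_{i+1}}\r$, and by Lemma~\ref{lem:inclusion}\eqref{enum:a} the last factor equals $\l g_{i+1},p^{e_{i+1}}\r$ (here $G_i$ refers to the post-Hensel-lift polynomial $G_i^\star \bmod \l T_i\r$ placed in $\calD$ at Line~\ref{enum:block}, and $g_{i+1}$ to the lifted $g_{i+1}^\star \bmod \l T_{i+1}\r$ of Line~\ref{enum:Anext}). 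This gives one inclusion
$$
\l a,b,p^{e_{i+1}}\r \subset \l G_1,p^{e_1}\r\cdots\l G_i,p^{e_i}\r\l g_{i+1},p^{e_{i+1}}\r.
$$
For the reverse inclusion I would use Corollary~\ref{cor:prelim}: expanding the product on the right, every generator is of the form $g_{i+1}\prod_{j\in\calI}p^{e_j}\prod_{j\in\bar\calI}G_j$ for a partition $\calI,\bar\calI$ of $\jump{1,i}$ — times $p^{e_{i+1}}$ in the degenerate terms — and Corollary~\ref{cor:prelim} (applied with $s$ replaced by $i+1$, which is legitimate since it only uses the relations $g_{j-1}=G_{j-1}g_j \bmod p^{e_{j-1}}$ and Proposition~\ref{prop:prelim}, both available up to index $i+1$) shows each such generator lies in $\l a,b,p^e\r\subset\l a,b,p^{e_{i+1}}\r$. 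Combining the two inclusions yields $\l a,b,p^{e_{i+1}}\r=\l G_1,p^{e_1}\r\cdots\l G_i,p^{e_i}\r\l g_{i+1},p^{e_{i+1}}\r$, which is precisely~\eqref{eq:block} at level $i+1$ once one translates products of comaximal-at-$p$ ideals into the direct-product-of-rings statement (comaximality modulo $p$ follows from the $\xgcd$ call at Line~\ref{enum:xSres} and the degree-drop $\deg_y(g_{i+1})<\deg_y(g_i)$ coming from the subresultant structure in Algorithm~\ref{algo:largestFactor}). This closes the induction.

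It remains to check that the terminal output is a gcd chain, i.e. that the three bulleted conditions of Definition~\ref{def:gcdChain} hold. The strict inequalities $e_1<\cdots<e_s\le e$ follow from Lemma~\ref{lem:in-out} (each call to ``largestFactor'' consumes a strictly positive power, by the loop-exit analysis of Algorithm~\ref{algo:subres}: $S_{r_{j+1}}$ being nilpotent but nonzero forces $1\le e_{i+1}-e_i$), and $\deg_y(g_1)>\cdots>\deg_y(g_s)$ from the fact that successive $g_i$ are (Weierstrass-monic forms of) strictly decreasing-degree subresultants. The second bullet — $g_i$ is the product of all common factors of $a,b$ at precision $\ge e_i$ — is obtained by chaining Lemma~\ref{lem:correct} (the divisors of $g_1$ mod $p$ are exactly the common factors of $a,b$) with Lemma~\ref{lem:B1} (common factors at precision $>e_1$ descend to divisors of $B_1$ at precision shifted by $e_1$), iterated down the chain. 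The third bullet — $g_{i+1}\mid g_i \bmod p^{e_{i+1}}$, the definition $G_i=g_i/g_{i+1}\bmod p^{e_{i+1}}$, and the resulting isomorphism~\eqref{eq:isochain} — is exactly what the Hensel-lift step (Lemma~\ref{lem:HL}) guarantees together with the level-$s$ instance of~\eqref{eq:block} proved above (note~\eqref{eq:isochain} is the $i=s$ case of~\eqref{eq:block} since $g_s=G_s$). The main obstacle I anticipate is not any single inclusion but keeping the precisions straight: the division at Line~\ref{enum:divbyp} drops precision by $e_i$ at each stage, so one must be careful that the ideal equalities proved ``modulo $p^{e_{i+1}-e_i}$'' inside ``largestFactor'' are correctly promoted to ``modulo $p^{e_{i+1}}$'' by the Hensel lift before they are assembled into~\eqref{eq:block}; this is the one place where a slightly wrong exponent would silently break the argument, and it is why Lemma~\ref{lem:in-out} and Proposition~\ref{prop:prelim} are stated with the explicit $p^{e_{i-1}}g_i$, $p^{e_i}B_i$ normalizations that exactly absorb the loss.
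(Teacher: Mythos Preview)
Your approach is essentially the paper's: both establish the ideal equality~\eqref{eq:block2} (your block equation at level $i+1$) by combining Lemma~\ref{lem:inclusion} for the inclusion $\subset$ and Corollary~\ref{cor:prelim} for the inclusion $\supset$, with Lemma~\ref{lem:firstblock} as base case and Lemma~\ref{lem:in-out} bookkeeping the exponents. The paper does not phrase this as an induction on $i$ but the logical content is identical.

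There is, however, one genuine gap in your final step. You write that ``\eqref{eq:isochain} is the $i=s$ case of~\eqref{eq:block} since $g_s=G_s$''. The right-hand sides do coincide, but the left-hand sides do not: \eqref{eq:block} at $i=s$ is an isomorphism for $(k[x]/\l p^{e_s}\r)[y]/\l a,b\r$, whereas \eqref{eq:isochain} is for $R[y]/\l a,b\r=(k[x]/\l p^{e}\r)[y]/\l a,b\r$, and nothing you have said forces $e_s=e$. Equivalently, at the ideal level you have proved $\l a,b,p^{e_s}\r=\l G_1,p^{e_1}\r\cdots\l g_s,p^{e_s}\r$ but you need $\l a,b,p^{e}\r$ on the left. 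Corollary~\ref{cor:prelim} gives you the generators containing $g_s$ inside $\l a,b,p^e\r$, but the generators of the product ideal that pick $p^{e_s}$ from the last factor are only handled if you know $p^{e_s}\in\l a,b,p^e\r$. The paper closes this by a termination analysis of the while loop: either $B_s=1$, in which case Proposition~\ref{prop:prelim} gives $p^{e_s}B_s=p^{e_s}\in\l a,b,p^e\r$; or $B_s=$``$end$'', in which case inspecting Lines~\ref{enum:enda}, \ref{enum:endb}, \ref{enum:returnzero} of Algorithm~\ref{algo:largestFactor} shows that the returned $T$ equals the input $S_{s-1}=p^{e-e_{s-1}}$, hence $e_s=e$. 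You should add this case distinction; without it the passage from~\eqref{eq:block} to~\eqref{eq:isochain} is unjustified.
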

\begin{proof}
We must show that Isomorphisms~\eqref{eq:isochain} and~\eqref{eq:block} hold.
By the Chinese remaindering theorem it suffices to show that
\begin{equation}\label{eq:isochain2}
 \l a, b , p^{e}\r = \l G_1 , p^{e_1} \r \cdots \l G_{s-1} , p^{e_{s-1}}\r \l g_s , p^{e_s}\r.
\end{equation}
holds, in order to prove~\eqref{eq:isochain}; and to prove that~\eqref{eq:block} holds, it suffices to show that:
\begin{equation}\label{eq:block2}
\text{for all } s \ge  i \ge 1,\ \ \ 
\l a, b , p^{e_i}\r = \l G_1 , p^{e_1} \r \cdots \l G_{i-1} , p^{e_{i-1}}\r \l g_i , p^{e_i}\r,
\end{equation}

Assume first that $s = 1$. 
Then Eq.~\eqref{eq:block2} and Eq.~\eqref{eq:isochain2} corresponds
respectively to the first and to the second equality of Lemma~\ref{lem:firstblock}.

Assume now that $s > 1$. Let us prove Eq.~\eqref{eq:block2} first.
The   case $i=1$ amounts to
$\l g_1 , p^{e_1} \r = \l a , b , p^{e_1}\r$
and is provided by Lemma~\ref{lem:firstblock}.
Assuming now $i>1$.
Statements~\ref{enum:a'}.-\ref{enum:a} of Lemma~\ref{lem:inclusion} together insure that
$\l a,b,p^{e_i}\r \subset \l G_1 ,p^{e_1}\r \cdots \l G_{i-1}, p^{e_{i-1}}\r \l g_i ,p^{e_i}\r$ ($\dagger$).

On the other hand, Corollary~\ref{cor:prelim} insures of the inclusion:
$$
\l G_1 , p^{e_1} \r \cdots \l G_{i-1} , p^{e_{i-1}}\r \l g_{i} \r \subset \l a, b, p^e   \r.
$$
Indeed the generators of the product ideals on the l.h.s. are proved in that
corollary to belong to the r.h.s. It follows that:
$$
\l G_1 , p^{e_1} \r \cdots \l G_{i-1} , p^{e_{i-1}}\r  
\l g_{i},p^{e_{i}}\r  \subset \l a, b, p^{e_{i}}   \r.
$$
Together with $(\dagger)$, this proves Eq.~\eqref{eq:block2}.
Let us prove  Eq.~\eqref{eq:isochain2}.  Eq.~\eqref{eq:block2} with $i=s$ yields:
$$
\l G_1 , p^{e_1} \r \cdots \l G_{s-1} , p^{e_{s-1}}\r \l g_{s} , p^{e_{s}}\r =   \l a, b, p^{e_s}   \r.
$$
Corollary~\ref{cor:prelim} provides the inclusion:
$$
 \l a, b , p^{e}\r \supset \l G_1 , p^{e_1} \r \cdots \l G_{s-1} , p^{e_{s-1}}\r \l g_s \r.
$$
Moreover, since by assumption there are  $s$ iterations, the while loop at Line~\ref{enum:while}
of Algorithm~\ref{algo:gcdChain} stops at the $s+1$-th iteration.
Either  $B_s\not=$``$end$'' 
and then $B_s=1$.
Proposition~\ref{prop:prelim}  then
guarantees that $p^{e_s} B_s = p^{e_s}\in \l a,b,p^{e}\r$. We obtain the equality~\eqref{eq:isochain2}
required.
Either  $B_s=$``$end$'': the $s-1$-th call to ``largestFactor'' at Line~\ref{enum:iterate}
with input $g_{s-1}, B_{s-1},p^{e-e_{s-1}}$  exits at one of the Lines~\ref{enum:enda},~\ref{enum:endb},~\ref{enum:returnzero}.
If it is at Line~\ref{enum:enda}, then $B_{s-1}=0$ which happens eventually only if $b=0$, that is if there is only one iteration;
 Since $s>1$ here, this cannot happen.
%% Otherwise, this  case is treated at the {\em  previous} iteration at Line~\ref{enum:returnzero}.
If it is at Line~\ref{enum:endb}, then $B_{s-1}=1$, and there is no such $s$-th iteration since
the test at Line~\ref{enum:while} is not passed. It remains the case of Line~\ref{enum:returnzero}.
Then the output  $T$  is equal here to $p^{e_s}$ and we have $e=e_s$. Then Eq.~\eqref{eq:isochain2}
and Eq.~\eqref{eq:block2} coincide.

\hfill $\Box$ \end{proof}

%% 
%% 
%% 
%% and after the Hensel lifting of Line~\ref{enum:HL}
%% $g_i \equiv G^\star A^\star \bmod \l T_{next}\r$.
%% Let $g_{i+1} := A_{next}$ (Line~\ref{enum:Anext}), $p^{e_{i+1}} := T_{next}$ (Line~\ref{enum:Tnext}),
%% and $G_{i}:=G^\star \bmod \l p^{e_{i}}\r$ (Line~\ref{enum:block}). 
%% %% According to the specification of largestFactor, holds $\l g_{i+1} , p^{e_{i+1}}\r = \l g_i , B_{next}, p^{e_{i+1}}\r$.
%% We have $\l g_{i+1} \ , \ p^{e_{i}} \r \cdot \l G_i , p^{e_i}\r = \l g_i, p^{e_i}\r$ 
%% %% and  $\l g_{i+1} \ , \ p^{e_{i+1}} \r \cdot \l G^\star , p^{e_{i+1}}\r = \l g_i, p^{e_{i+1}}\r$.
%% On the other hand by induction hypothesis, 
%% $\l a, b , p^{e_i}\r = \l G_1,p^{e_1} \r \cdots \l G_{i-1} , p^{e_{i-1}}\r \l g_i , p^{e_i}\r$.
%% Therefore $\l a, b , p^{e_i}\r = \l G_1,p^{e_1} \r \cdots \l G_{i-1} , p^{e_{i-1}}\r\l G_{i} , p^{e_{i}}\r \l g_{i+1} , p^{e_i}\r$.
%% It is moreover a direct product. 
%% Now $\l a, b , p^{e_i}\r\supsetneq \l a, b , p^{e_{i+1}}\r $. Now $p^{e_{i+1}}$ is contained in all factors in the r.h.s.,
%% 
%% TO DO
%% 
%% 
%% $\l g_{i+1} \ , \ p^{e_{i+1}} \r +  \l G_i , p^{e_i}\r= \l 1\r$.
%% 
%% \hfill $\Box$ \end{proof}

\section{A variant of  Weierstrass preparation's theorem\label{sec:Weier}}
The Weierstrass preparation theorem  states that
a formal power series $f = \sum_i a_i X^i \in \fa [[X]]$ with coefficients in a
local complete ring $(\fa,\fm)$, not all of them lying in $\fm$,
has a unique factorization $f = q u$ where $q = q_0 + \cdots + q_{n-1} X^{n-1} + X^n$ is
monic and $q_i \in \fm$,
and where $u \in \fa[[x]]^\star$ is an invertible power series.

In our context the local complete ring is $ \fa = k[x]/\l p^e\r$, $\fm=\l p\r$
(indeed it is equal to $k[[x]]/\l p^e\r$, which is a finite quotient of the local  complete ring
$k[[x]]$).
But the factorization supplied by the
classical version ({\em e.g.}~\cite[Theorem~9.1]{lang2002algebra})
does not fit the needs of this work. The following variant does:

\begin{proposition}\label{prop:WeierFac}
Let $(\fa,\fm)$ be a complete local ring and let $f\in \fa [X]$ a {\em polynomial}, say $f=f_d X^d + \cdots +f_0$ which has not all of its coefficients $f_i$ lying in  $\fm$.
Write $f_k$ the coefficient of highest degree that is not in $\fm$. There are unique polynomials $q$ and $u$ in $\fa[X]$ such that $f = u q $ where:
\begin{itemize}
\item $q$ is monic of degree $k$.
\item $u = u_0 + u_1 X + \cdots + u _{d-k}  X^{d-k}$ where $u_0\not \in \fm$ and $u_i \in \fm$ for $i\ge 1$ (note that  $u$ is a unit of $\fa[[X]]$).
\end{itemize}
\end{proposition}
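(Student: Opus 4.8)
The plan is to prove existence and uniqueness separately, reducing to the classical Weierstrass preparation theorem applied to $f$ viewed as a power series in $\fa[[X]]$. First I would observe that $f\in\fa[X]\subset\fa[[X]]$, and that since $f_k\notin\fm$ while $f_{k+1},\ldots,f_d\in\fm$, the smallest index of a coefficient not in $\fm$ is exactly $k$. The classical Weierstrass preparation theorem (e.g.~\cite[Theorem~9.1]{lang2002algebra}) then yields a unique factorization $f = \tilde u\,\tilde q$ in $\fa[[X]]$ with $\tilde q = \tilde q_0 + \cdots + \tilde q_{k-1}X^{k-1} + X^k$ monic of degree $k$ with $\tilde q_i\in\fm$, and $\tilde u\in\fa[[X]]^\star$ a unit, i.e.\ $\tilde u = u_0 + u_1 X + \cdots$ with $u_0\notin\fm$.

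The main work is to show that both $\tilde q$ and $\tilde u$ are in fact \emph{polynomials}, with $\tilde u$ of degree exactly $d-k$. For this I would use that $\fa$ is a \emph{finite} quotient, namely $\fa = k[x]/\l p^e\r$ is Artinian local with residue field $k[x]/\l p\r$ (this is precisely the setting of the paper; indeed one may even argue more directly since $\fm^e = 0$). Write $\tilde u = \sum_{i\ge 0} u_i X^i$. Comparing coefficients in $f = \tilde u\,\tilde q$ for degrees $> d$: since $\tilde q$ is monic of degree $k$, the coefficient of $X^{k+m}$ in $\tilde u\tilde q$ for $m > d-k$ equals $u_m + (\text{a }\fm\text{-combination of }u_{m-1},\ldots,u_{m-k})$, and this must be $0$ because $\deg f = d < k+m$. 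An induction on $m$ shows $u_m\in\fm$ for all $m\ge 1$ (the base cases $1\le i\le d-k$ are immediate from $\tilde u$ being a unit), and then a second induction using $\fm^e=0$ forces $u_m = 0$ for $m$ large; more carefully, from $u_m = -(\fm\text{-comb. of }u_{m-1},\ldots,u_{m-k})$ for $m>d-k$ one gets $u_m\in\fm^{\lceil (m-(d-k))/1\rceil}$-type bounds, so eventually $u_m\in\fm^e = 0$. Hence $\tilde u$ is a polynomial of degree $\le d-k$; since the top coefficient $f_d$ of $f$ equals $u_{d-k}\cdot 1$ (the product of the leading coefficients) and $f_d$ may or may not vanish, one still has $\deg\tilde u\le d-k$, and then $\tilde q = f/\tilde u$ forces $\tilde q$ to be a polynomial of degree $k$ as well — which it already is. Setting $q = \tilde q$, $u = \tilde u$ gives existence.

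For uniqueness, suppose $f = u q = u' q'$ are two factorizations of the stated form. Both are factorizations of $f$ in $\fa[[X]]$ into a unit times a monic polynomial with coefficients in $\fm$ of the same degree $k$ (here I use that the degree of the monic factor is forced: reducing mod $\fm$, $\bar f = \bar u_0 \bar q$ with $\bar u_0$ a unit of $(\fa/\fm)[[X]]$ and $\bar q = X^k$, so $k = $ the order of vanishing of $\bar f$, independent of the factorization). The uniqueness clause of the classical Weierstrass preparation theorem then gives $q = q'$ and $u = u'$ in $\fa[[X]]$, hence in $\fa[X]$.

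The step I expect to be the main obstacle is controlling the tail of the unit $\tilde u$ — showing it terminates as a polynomial of degree exactly $d-k$ rather than merely lying in $\fa[[X]]^\star$. The clean way is to exploit that $\fa$ is Artinian with $\fm^e = 0$: the recursion $u_m = -\sum_{i=1}^{k} \tilde q_{k-i} u_{m-i}$ for $m > d-k$ (with $\tilde q_{k-i}\in\fm$) shows each new coefficient gains a factor from $\fm$, so after finitely many steps all coefficients land in $\fm^e = 0$. An entirely alternative and perhaps cleaner route, avoiding power series altogether, is a direct divide-and-conquer / Newton-type argument constructing $q$ and $u$ by successive approximation modulo $\fm, \fm^2, \ldots, \fm^e$, which is essentially what Corollary~\ref{cor:div} and Algorithm~\ref{algo:WeierMonic} do; I would mention this as the constructive counterpart but carry out the existence/uniqueness proof via the classical theorem as above.
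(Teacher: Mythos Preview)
There is a genuine gap at the very first step. You assert that ``the smallest index of a coefficient not in $\fm$ is exactly $k$'', but the hypothesis says $f_k$ is the coefficient of \emph{highest} degree not in $\fm$: this tells you $f_{k+1},\ldots,f_d\in\fm$ and says nothing about $f_0,\ldots,f_{k-1}$. The classical Weierstrass preparation theorem applied to $f$ produces a monic factor of degree equal to the \emph{smallest} index $n$ with $f_n\notin\fm$, and in general $n<k$. Concretely, take the paper's own Example~\ref{ex:1}: $\fa=k[x]/\l x^2\r$, $f=xy^2+y+1$. Here $k=1$ but $f_0=1\notin\fm$, so $n=0$; classical Weierstrass gives the trivial factorization $f=1\cdot f$ (monic part of degree $0$), not the desired $q=y+x+1$ of degree $1$. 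Moreover, even the \emph{shape} of the factorization is wrong for your argument: classical Weierstrass forces $\tilde q_i\in\fm$ for $i<\deg\tilde q$, whereas the $q$ in the Proposition has no such constraint (in the example $q_0=1+x\notin\fm$); it is the unit $u$ that carries the $\fm$-condition on its higher coefficients. So your subsequent coefficient recursion ``$u_m=-\sum \tilde q_{k-i}u_{m-i}$ with $\tilde q_{k-i}\in\fm$'' is based on a factorization that simply is not the one in the statement.

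The paper's proof fixes exactly this mismatch by passing to the \emph{reversal} $rev_d(f)=X^d f(1/X)$. Reversal swaps ``highest index not in $\fm$'' with ``lowest index not in $\fm$'': in $rev_d(f)$ the first unit coefficient occurs at degree $d-k$, so the classical theorem applies and yields $rev_d(f)=g\,s$ with $g$ monic of degree $d-k$, $g_i\in\fm$, and $s$ a unit power series. Because $rev_d(f)$ is a polynomial of degree $d$ and $g$ is monic of degree $d-k$, the unit $s$ is forced to be a \emph{polynomial} of degree $k$ --- this is the clean substitute for your tail-termination argument. Reversing back gives $u=rev_{d-k}(g)$ (constant term $1$, higher coefficients in $\fm$) and, after dividing by the unit $s_0$, the monic $q=s_0^{-1}\,rev_k(s)$. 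Uniqueness then follows from uniqueness in the classical theorem applied to $rev_d(f)$, not to $f$.
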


The proof is adapted from~\cite[Theorem~9.1-9.2]{lang2002algebra}.

\begin{proof}
Write $rev_d(f):=X^d f(\frac 1 X)$ the reversal polynomial of $f$ (this is why we need $f$ to be a polynomial and not a general
power series). The term of smallest degree not in $\fm$ is $f_k X^{d-k}$. By the standard Weierstrass preparation Theorem~9.2 of~\cite{lang2002algebra}
$rev_d(f) = g s$ where $s$ is an invertible power series  in $\fa [[X]]$
and $g = X^{d-k} + g_{d-k-1} X^{d-k-1} +\cdots + g_0$ is a monic polynomial which satisfies
$g_i\in \fm$. Since $\deg(rev_d(f))= d$ and that $g$ is monic, necessarily $s\in \fa[X]$ and is of degree $k$.
We write $s=s_0 + \cdots + s_k X^k$. Thus:
$$
rev_d(rev_d(f)) = f = rev_{d-k} (g) rev_k (s) = (1 + g_{d-k-1} X + \cdots + g_0 X^{d-k} ) ( s_k + \cdots +  s_0 X^k )
$$
Now since $s$ is invertible, $s_0\in \fa^\star$.
Letting $ u = s_0  + \cdots + g_0 s_0 X^{d-k}$ and $ q = s_0^{-1} s_k +\cdots + s_0^{-1} s_1 X^{k -1}+ X^k$ provides
polynomials satisfying  $f = u q$ as well as the requirement of the proposition.
%% 
%%  $f= (s_0  + \cdots + g_0 s_0 X^{d-k}) ( s_0^{-1} s_k +\cdots + X^k)  = u q$. Indeed
%% $u = s_0 +\cdots$ is a unit in $\fm [[X]]$ with $s_i \in \fm$ for $i\ge 1$; And $q$ is monic.
\hfill $\Box$ \end{proof}

% In our algorithm, only the polynomial $q$ is necessary. 
The polynomial  $q$ can be computed by an Euclidean division.
This possibility is a minor adaptation of~\cite[Theorem~9.1]{lang2002algebra},
that we have reproduced in Appendix for sake of completeness.

\begin{corollary}\label{cor:div}
With the same notations and hypotheses of Proposition~\ref{prop:WeierFac},
given the inverse coefficient $f_k^{-1}$, the monic polynomial
$q$ can be computed by mimicking the Euclidean
division of $X^k$ by $f_k X^k + f_{k-1} X^{k-1} +\cdots$,
but with $f$ instead of $f_k X^k + f_{k-1} X^{k-1} +\cdots$
to obtain $X^k =  g f + r$.
The power series $g$ 
needs to be computed only up to modulo $X^{d-k+1}$.
Followed
by the inversion of the  truncated power  series $g$ to get: $q = (X^k - r) g^{-1}$.
%%
%%If $\fa = k[x]/\l p^e\r$, then this requires up to $O(\M( k ))$ operations in $R$.
\end{corollary}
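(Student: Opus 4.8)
The plan is to start from the factorization $f = uq$ furnished by Proposition~\ref{prop:WeierFac}, where $q$ is monic of degree $k$ and $u = u_0 + u_1 X + \cdots + u_{d-k} X^{d-k}$ with $u_0 \notin \fm$ and $u_i \in \fm$ for $i \ge 1$. The key observation is that $u$, viewed in $\fa[[X]]$, is invertible: its constant term $u_0$ is a unit of $\fa$ (since $\fa$ is local and $u_0 \notin \fm$), so $u^{-1}$ exists as a power series. Multiplying $f = uq$ by $u^{-1}$ gives $u^{-1} f = q$; since $\deg_X(q) = k$, this means that in $\fa[[X]]$ the product $u^{-1} f$ is actually the polynomial $q$, i.e. all coefficients of $X^{k+1}, X^{k+2}, \dots$ in $u^{-1} f$ vanish. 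Equivalently, there is a power series $g := u^{-1}$ such that $gf = q = X^k + (\text{lower order terms})$, so $gf \equiv X^k \bmod X^{k+1}$ up to the lower-degree part; writing $X^k = gf + r$ with $\deg_X(r) < k$ identifies $r = X^k - q$, a polynomial of degree $< k$.

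Next I would argue that this $g$ is exactly what the "mimicked Euclidean division" of $X^k$ by $f$ computes, step by step. In ordinary Euclidean division of $X^k$ by a degree-$k$ polynomial with invertible leading coefficient $f_k$, one iteratively kills the top-degree term: the first quotient term is $f_k^{-1}$, which one subtracts after multiplying by $f$, then one continues on the new remainder. Here $f$ has $\deg_X(f) = d \ge k$ but its coefficients $f_{k+1}, \dots, f_d$ lie in $\fm$; the same elimination procedure, applied with $f$ in place of its truncation $f_k X^k + \cdots + f_0$, still terminates as a power series computation because each elimination step lowers the degree of the "active" part of the current remainder by one at the $X^k$-level while pushing contributions only to strictly lower degrees (the nilpotent-looking higher terms of $f$ feed back into lower degrees, never raise the degree of what must still be eliminated). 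I would make this precise by an induction on the number of elimination steps, showing the partial quotient after $m$ steps agrees with $g \bmod X^m$, and that after $k+1$ steps (hence computing $g$ up to $O(X^{d-k+1})$ suffices, since $g = u^{-1}$ and $u$ has degree $d-k$, so $u^{-1}$ is determined mod $X^{d-k+1}$ by $u$ — here one should double-check the exact truncation bound) one recovers $g$ to the needed precision and $r = X^k - gf$ is the degree-$<k$ remainder. This is where the only real subtlety lies: verifying that the naive division against the \emph{full} polynomial $f$ (not its truncation) produces precisely the power-series inverse $u^{-1}$ rather than some spurious series, which follows because $f = uq$ forces $g f = g u q$, and the division algorithm's output $g$ is characterized uniquely by $gf \equiv X^k \bmod X^{k+\text{(something)}}$ together with the degree constraint on the remainder — and $u^{-1}$ satisfies this by the previous paragraph, so by uniqueness they coincide.

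Finally I would close by inverting: from $X^k = gf + r$ one gets $X^k - r = gf$, and since $g$ is a unit of $\fa[[X]]$ (its constant term is $f_k^{-1}$ up to a unit, hence invertible — more precisely $g = u^{-1}$ and $u$ is a unit), we may write $f = g^{-1}(X^k - r) = (X^k - r) g^{-1}$. Comparing with $f = uq$ and using uniqueness in Proposition~\ref{prop:WeierFac} yields $q = g^{-1}(X^k - r)$, i.e. $q = (X^k - r) g^{-1}$, which is exactly the claimed formula; because $q$ is genuinely a polynomial of degree $k$, it is enough to compute $g^{-1}$ — equivalently $g$, then invert — modulo $X^{d-k+1}$, as asserted. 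The main obstacle, as noted, is the bookkeeping in the middle paragraph: making the claim "the Euclidean-division-against-$f$ procedure outputs $u^{-1}$" airtight, for which the cleanest route is the uniqueness characterization of quotient/remainder in the completed ring $\fa[[X]]$ rather than tracking the elimination steps by hand.
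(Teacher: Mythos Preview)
Your approach is essentially the paper's: both identify the pair $(g,r)$ in $X^k = gf + r$ with $(u^{-1},\, X^k - q)$ via uniqueness. The paper is slightly more economical in that it invokes a Weierstrass \emph{division} lemma (existence and uniqueness of $g \in \fa[[X]]$, $r \in \fa[X]$, $\deg r < k$ such that $X^k = gf + r$) and then reads off $u = g^{-1}$ and $q = X^k - r$ from uniqueness of the factorization in Proposition~\ref{prop:WeierFac}; this single citation replaces your entire middle paragraph, and is in fact the ``cleanest route'' you yourself suggest at the end of that paragraph.

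One slip worth flagging: in your final paragraph you correctly derive $f = (X^k - r)\,g^{-1}$, but then conclude $q = (X^k - r)\,g^{-1}$. That right-hand side equals $f$, not $q$. Uniqueness of the factorization $f = u\cdot q$ yields the two identifications $u = g^{-1}$ and $q = X^k - r$ \emph{separately}, which is consistent with your own earlier computation $r = X^k - q$. (The corollary's displayed formula shares this issue; Algorithm~\ref{algo:WeierMonic} simply returns $y^k - r$, and Example~\ref{ex:1} confirms $q = X^k - r$, with $g^{-1}$ recovering $u$ rather than multiplying into $q$.)
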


\begin{proof}
By Lemma~\ref{lem:WeierDiv}, there exits a unique invertible power series 
$g\in \fa[[X]]$ and a unique polynomial $r\in \fa[X]$ such that 
$X^k =  g  f + r$.  Therefore $(X^k-r) g^{-1} = f$ is the factorization of Proposition~\ref{prop:WeierFac},
by uniqueness. We deduce that
$g^{-1}$ is a polynomial of degree $d-k$ that verifies the conditions of the 
aforementioned proposition.
In consequence, the power series $g$ needs to be known only modulo $X^{d-k+1}$.
%% 
%% The cost analysis of the Euclidean division
%% is always lower than $\M(d)$ (\cite[Thm~9.6]{GaGe03})
%% and the inversion of $g$ is less than $\M(d-k+1)$
%% 
\hfill $\Box$ \end{proof}
 
\begin{ex}\label{ex:1}
Let $\fa=k[[x]]/\l x^2\r \simeq k[x]/\l x^2\r$ be the complete local ring of maximal ideal
$\fm=\l x\r$, $f=x y^2 + y + 1$. The Weierstrass factorization of Proposition~\ref{prop:WeierFac}
insures the existence of a polynomial  $q = y + \cdots\in \fa[y]$ and $u = u_0 + u_1 y $ with $u_0\in \fa^\star$
and $u_1\in \fm$, such that $f= u q$. To compute it,
Euclidean division as explained in Corollary~\ref{cor:div} works as follows (boxed terms are parts of the remainder).
$$
\begin{array}{ccccr|r}
   & y & & & &   x y^2 + y + 1 \\
  \cline{6-6}
-x y^2 & &  & & \fbox{-1} & 1 - x y + x  \\
& x y & & & &  \\
& & & & \fbox{-x} & 
\end{array}
$$
We obtain $y = ( x y^2 + y + 1)(1 - x y + x )  -x -1 $, hence
$ q u =(y+ x +1) (1-x y +x)^{-1}=x y^2 + y + 1$ is the factorization of Proposition~\ref{prop:WeierFac}.
Now $u=(1-x y +x)^{-1} = 1 + \sum_{i\ge 1} x^i (y-1)^i = 1+ x (y-1)$ in $(k[x]/\l x^2\r )[y]$. We check that $u_0= 1-x \in \fa^\star$ and $u_1=x \in \fm$.
\end{ex}

\section{Concluding remarks}
%% \subsection{Towards triangular sets of several variables}

\subsubsection*{A running example} 
Let $p(x)=x$, $e=7$, that is the input primary triangular set in one variable is $T(x)=x^7$.
The two input monic polynomials in $(\Q[x]/\l T(x)\r)[y]$ are defined
through their unique factorization into (two) coprime factors as:
$$
\begin{array}{rcl}
a & := & (y+x+x^2+x^3+x^4+x^5+x^6)*(y+1+x^3+x^6)\\
b & := & (y+x+x^2+2*x^3+x^4+x^5+x^6)*(y+1+x^3+x^5)
\end{array}
$$

\noindent gcdChain$(a,b,x^7)==$
\begin{itemize}

\item First call to ``largestFactor'' Line~\ref{enum:while} ($i=0$): $\LargestFactor (a,b,x^7) ==$
  \begin{itemize}
    \item Line~\ref{enum:28}     $\subres (a,b,x^7) ==$
    \begin{itemize}
     \item Line~\ref{enum:sresT}:
       Subresultant mod $x^7$ does not fail; We obtain $\bfS=[a,b,\ldots]$, precisely:
       $$
       \begin{array}{l}
         \bfS = [
           y^2 + (2*x^6 + x^5 + x^4 + 2*x^3 + x^2 + x + 1)*y + 2*x^6 + 2*x^5 + 2*x^4 + x^3 + x^2 + x,\\
           y^2 + (x^6 + 2*x^5 + x^4 + 3*x^3 + x^2 + x + 1)*y + 4*x^6 + 2*x^5 + 2*x^4 + 2*x^3 + x^2 + x,\\
           (-x^6 + x^5 + x^3)*y + 2*x^6 + x^3,\\
           0
         ]
       \end{array}
       $$
       \item The last non nilpotent subresultant is $b=\bfS[2]$ and the first nilpotent 
       is $\bfS[3] = (-x^6 + x^5 + x^3)*y + 2*x^6 + x^3$.

       \item While loop Line~\ref{enum:stillNil} ($j=3$) $S_{r_j} = \bfS[3]$ \\
         $\nilpotentFactor(S_{r_{j}} , x^7)== x^3, -1$  $\qquad (P_{new}=x^3$,  
         $x^3\mid \bfS[3],\ i_{new}=-1)$
       \item While loop Line~\ref{enum:stillNil} ($j=2$) $S_{r_j} = \bfS[2]=b$ \\
         $\nilpotentFactor(S_{r_{j}} , x^7)== 1, 2$ $\qquad (P_{new}=1$,
         $\coeff(b,i_{new})=\coeff(b,y^2)=1)$
       \item exit while loop and return $\subres (a,b,T) == \bfS, 2, 2, x^3$
    \end{itemize}
  \item Line~\ref{enum:WeierA}: $\WeierstrassMonic(b, x^7, 2) == b$
  \item Line~\ref{enum:divbyp}: $S = \bfS[3]/x^3 =  (-x^3 + x^2+1)*y + 2*x^3 +1 $
  \item Lines~\ref{enum:33}-\ref{enum:34}: $p^\ell = T/x^3 = x^4$, 
    $  \qquad \nilpotentFactor(S, x^4)== 1, 1$ ($i=1$) 
  \item Line~\ref{enum:WeierB}: $B_1 = y + 3 *x^3 - x^2 + 1 $
  \item return $b, x^3, B_1, x^4$
  \end{itemize}
  \item Line~\ref{enum:42}, Line~\ref{enum:52}: $\Tree=[x^3]$, $\calC=[g_1]$ ($g_1=b \bmod \l x^3\r$)
  \item Line~\ref{enum:while}  ($i=1$): $\LargestFactor(g_1, B_1 , x^4) ==$
    \begin{itemize}
      \item Line~\ref{enum:28}:     $\subres (g_1,B_1 ,x^4) ==$
      \begin{itemize}
      \item Line~\ref{enum:sresT} Subresultant mod $x^4$ does not fail; We obtain 
        $$
        \begin{array}{l}
          \bfS=[
            y^2 + (3*x^3 + x^2 + x + 1)*y + 2*x^3 + x^2 + x,\\
            y + 3*x^3 - x^2 + 1,\\
            3*x^3 - x^2,\\
            0
          ]
        \end{array}
        $$
        \item The last non nilpotent is 
          actually $B_1=\bfS[2]$ and the first one is $\bfS[3] = 3*x^3 - x^2$.
        \item While loop Line~\ref{enum:stillNil} ($j=3$) $S_{r_j} = \bfS[3]$ \\
          $\nilpotentFactor(S_{r_{j}} , x^4)== x^2, -1$  
          $\qquad (P_{new}=x^2, \ \  x^2 \mid \bfS[3],\ i_{new}=-1$)
        \item While loop Line~\ref{enum:stillNil} ($j=2$) $S_{r_j} = \bfS[2]=B_1$ \\
          $\nilpotentFactor(S_{r_{j}} , x^4)== 1, 1$ $(\qquad P_{new}=1$,
          $\coeff(B_1, i_{new}) = \coeff( B_1 ,  y ) = 1)$
        \item exit while loop and return $\subres(g_1,B_1,x^4) == \bfS, 2, 1, x^2$
      \end{itemize}
    \item Line~\ref{enum:WeierA}: $\WeierstrassMonic(B_1, x^4, 1) == B_1$
  \item Lines~\ref{enum:divbyp} $S = \bfS[3]/x^2 =  3*x-1 $
  \item Lines~\ref{enum:33}-\ref{enum:34}: $p^\ell = x^4/x^2 = x^2$, 
    $  \qquad \nilpotentFactor(S, x^2)== 1, 0$ ($i=0$) 
  \item Line~\ref{enum:WeierB}: $B_2=3 * x 1 $
  \item return $ B_1 ( = g_2)  , x^2 (= T_2') , 3 * x - 1 ( = B_2)  , x^2 ( = S_2)$
  \end{itemize}
  \item (Hensel Lifting, preparation) Line~\ref{enum:Gi}:
    $G_1 = g_1/ g_2 = b/B = y+x  \bmod \l x^2\r$
  \item Lines\ref{enum:xSres}-\ref{enum:precision}
      $\alpha = -(1+x), \ \beta =(1+x) $, $T_2 = x^3 \cdot x^2 =x^5, \quad $
    precision$=5$.
    \item (Hensel Lifting): Step 1: from $x^2$ to $x^4$:
      $$
       G_1^\star = y + 2*x^3 + x^2 + x,\quad g_2^\star  = y + x^3 + 1
      $$
      Step2: from $x^4$ to $x^8$:
      $$
      G_1^\star = y - 2*x^7 + x^6 + x^5 + x^4 + 2*x^3 + x^2 + x,
      \quad g_2^\star = y + 2*x^7 + x^5 + x^3 + 1
      $$
    \item Line~\ref{enum:Anext}: $g_2 = g_2^\star \bmod \l x^5 \r  = y + x^3 + 1$
    \item Lines~\ref{enum:block}-\ref{enum:52}: $\calD =[ G_1^\star \bmod \l x^3\r] = [y  + x^2 + x]$, $\Tree = [ x^3 , x^5]$, $\calC = [ b  ,  y  + x^3 + 1]$
    \item Line~\ref{enum:while}  ($i=2$): $\LargestFactor(g_2, 3*x-1 , x^5) ==$
      \begin{itemize}
      \item Line~\ref{enum:endb}     $\deg_y(3*x-1)=0,\qquad $ return $3*x-1, x^6,$ ``end''.
      \end{itemize}
      \item return $\calC =[ b, y +  x^3 + 1]$, $\calD=[y + x^2 + x]$, $\Tree=[x^3, x^5]$.     
\end{itemize}

According to Definition~\ref{def:gcdChain}, we have here $s=2$ blocks. And moreover $G_1 = \calD[1]$ and $G_2 = g_2 = \calC[2]$,
yielding the isomorphism:
$$
(\Q[x]/\l x^7\r)[y]/\l a,b\r \simeq (\Q[x]/\l x^3\r)[y]/\l y + x^2 + x \r \times (\Q[x]/\l x^5\r)[y]/\l y + x^3 + 1 \r   
$$
In this example, this decomposition coincides with the primary decomposition.

\subsubsection*{Generalization}
All steps of Algorithm~\ref{algo:largestFactor} ``largestFactor''
extends to  more than one variable, except the management of the first nilpotent subresultant.
To illustrate this difficulty,
let us consider a primary triangular set $T = (x^2 , y^2)$ of radical $(x,y)$,
and  some input polynomials $a$ and $b$ in $(k[x,y]/\l T\r)[z]$.
It may happen that a  subresultant is equal to say
$ x z + y $ (for example $a=z^2 + 2 xz - y$ and $b=z^2 + x z - 2 y$).
It is nilpotent and the iterated resultant criterion allows to detect it.
But there is no  way to ``remove'' the nilpotent part  as in the case of a polynomial of one variable.
To apply Weierstrass preparation theorem, the polynomial must not be nilpotent.
A solution consists of ``adding'' this polynomial to the coefficient ring. How this integration
shall be done requires more work and  will be investigated in the future.

\subsubsection*{Complexity}
The running-time of the algorithm is dominated by that of the  subresultant calls.
Indeed, all other subroutines are indeed based on classical algorithms which have a lower cost:
Hensel lifting, inversion of a truncated power series are endowed of fast algorithms;
And Weierstrass factorization is reduced to an Euclidean division.
The standard subresultant p.r.s. have a quadratic (operations in $k[x]/\l p^e\r$)
cost in the degrees of the input polynomials.
Therefore a running-time of $O( (e \deg(p) )^2 \cdot \deg(a)^2  )$
can be estimated using naive but realistic  algorithms..

We speculate that a fast version of 
our algorithm has a quasi-linear cost $O\tilde ((e \deg(p) \cdot \deg(a))$, 
by using  fast ``divide and conquer'' (a.k.a half-gcd).
This is quite challenging, and as one knows, is not reflected in any practical algorithm.
For now, we care on {\em feasability} with a view toward extensions to several variables.

%% \subsubsection*{Cofactors}
%% It would be interesting to consider cofactors in Isomorphism~\eqref{eq:isochain}. We mean polynomials
%% $u, v$ and $\{ \upsilon_j\}_{1\le j \le s}$, such that:
%% $$
%% a u + b v = \upsilon_1 ( H_1 + p^{e_1} h_1) + \cdots +  \upsilon_s (H_s + p^{e} h_s),
%% \quad \text{where:}
%% $$
%% $H_i + p^{e_i} h_i \equiv 1 \bmod \l G_i \r$ for $i=1,\ldots,s$;
%% $\deg_y(\upsilon_i) < \deg_y(G_i)$;
%% $\deg_y(H_i) \le \sum_{j\not= i} \deg_y (G_j)$.
%% The degree constraints on $u$ and $v$ %%are not as easy as usual to state here.
%% %% They 
%% should be made to insure  uniqueness while minimizing the degree.

%% The resultant is likely  possible  to compute at the cost of minor
%% modifications. At first sight,
%% one should keep track of the invertible factors obtained when
%% undergoing  Weierstrass factorization. Interestingly, if the complexity can
%% be made quasi-linear as mentionned above, this would match the complexity of~\cite{moroz2016fast}
%% but by a more straightforward algorithm.

%% \begin{ex}\label{ex:2}

%% \end{ex}

\bibliographystyle{plain}
%% \bibliography{c:/Users/Xav/GoogleDrive/main}
%% \bibliography{d:/GoogleDrive/main}
%%\bibliography{/home/xavier/Volume/GoogleDrive/main}

\appendix
\section*{Appendix}

This is a modification of Lang's proof~\cite[Theorem~9.1]{lang2002algebra} up to a minor point indicated below,
of the Weierstrass division, which is used in Corollary~\ref{cor:div}.
\begin{Lem}\label{lem:WeierDiv}
With the same notations and assumptions of
Proposition~\ref{prop:WeierFac}, given $g\in \fa[[X]]$, we can solve the equation $g = q f +r$
uniquely with $q\in \fa[[X]]$ and $r\in \fa[X]$ and $\deg r < k$.
\end{Lem}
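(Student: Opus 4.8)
The plan is to recast the division $g = qf + r$ as the inversion, on $\fa[[X]]$, of an operator that differs from the identity by a \emph{topologically nilpotent} correction. First I would split $f$ at degree $k$, writing $f = \psi + X^k\phi$ with $\psi := f_0 + f_1X + \cdots + f_{k-1}X^{k-1}$ and $\phi := f_k + f_{k+1}X + \cdots + f_dX^{d-k}$. By the choice of $k$, every coefficient of $\psi$ lies in $\fm$, while $\phi(0) = f_k$ is a unit of $\fa$, so $\phi$ is a unit of $\fa[[X]]$. Next I introduce the two $\fa$-linear maps attached to the splitting of a series into its parts of degree $<k$ and $\ge k$: for $h = \sum_{i\ge 0}h_iX^i$ put $h^{<k} := \sum_{i=0}^{k-1}h_iX^i$ and $\pi(h) := \sum_{i\ge 0}h_{k+i}X^i$, so that $h = h^{<k} + X^k\pi(h)$, and $\pi(h) = 0$ precisely when $h$ is a polynomial of degree $<k$. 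Note also that $\pi(X^k u) = u$ for every $u \in \fa[[X]]$. This is exactly Lang's argument, with the only change that we split at the \emph{top} unit coefficient $f_k$ rather than the bottom one.

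With these notations, applying $\pi$ to $g = qf + r$ (using $\pi(r) = 0$, $\pi(qX^k\phi) = q\phi$, and $\fa$-linearity of $\pi$) turns the equation into $\pi(g) = q\phi + \pi(q\psi)$, that is $(\mathrm{id} + N)(q) = \phi^{-1}\pi(g)$ where $N(q) := \phi^{-1}\pi(q\psi)$. Conversely, once $q$ satisfies this identity one sets $r := g - qf$, and the same computation gives $\pi(r) = \pi(g) - \pi(q\psi) - q\phi = 0$, so $r$ automatically lies in $\fa[X]$ with $\deg r < k$; moreover any solution $(q,r)$ of the original equation forces $q$ to solve $(\mathrm{id}+N)(q) = \phi^{-1}\pi(g)$, after which $r$ is determined. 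Thus both existence and uniqueness reduce to bijectivity of $\mathrm{id} + N$ on $\fa[[X]]$.

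To finish I would check that $N$ is topologically nilpotent. Since the coefficients of $\psi$ are in $\fm$, multiplication by $\psi$ maps $\fa[[X]]$ into $\fm\,\fa[[X]]$; as $\pi$ and multiplication by $\phi^{-1}$ are $\fa$-linear, $N(\fa[[X]]) \subseteq \fm\,\fa[[X]]$ and, by induction, $N^n(\fa[[X]]) \subseteq \fm^n\fa[[X]]$. Because $\fa$ is $\fm$-adically complete and separated and $\fa[[X]] = \varprojlim_n(\fa/\fm^n)[[X]]$ inherits this, the Neumann series $\sum_{n\ge 0}(-1)^nN^n$ converges to a two-sided inverse of $\mathrm{id}+N$, giving the unique $q = (\mathrm{id}+N)^{-1}\phi^{-1}\pi(g)$ and hence the unique pair $(q,r)$.

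The main obstacle I anticipate is precisely this last convergence/separatedness step over a general complete local ring, together with the bookkeeping with $\pi$ needed to confirm that $r$ comes out a genuine polynomial rather than merely a power series. It is worth noting that in the situation actually used in this paper, $\fa = k[x]/\langle p^e\rangle$ satisfies $\fm^e = 0$, so $N^e = 0$ and the inverse of $\mathrm{id}+N$ is the finite sum $\sum_{n=0}^{e-1}(-1)^nN^n$; in that case the whole argument is completely elementary.
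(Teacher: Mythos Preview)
Your argument follows the same route as the paper's: both adapt Lang's proof, reducing $g=qf+r$ to inverting an operator $\mathrm{id}+N$ on $\fa[[X]]$ (the paper's $\alpha(f),\tau(f),\tau$ are your $\psi,\phi,\pi$, and the paper's unknown $Z=\tau(f)q$ is just $\phi q$). But there is a genuine error. You assert that ``by the choice of $k$, every coefficient of $\psi$ lies in $\fm$.'' Under the hypotheses of Proposition~\ref{prop:WeierFac} this is false: there $k$ is the \emph{highest} index with $f_k\notin\fm$, so it is $f_{k+1},\dots,f_d\in\fm$ that is guaranteed, while $f_0,\dots,f_{k-1}$ (the coefficients of $\psi$) are unrestricted. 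You have imported the classical Weierstrass hypothesis (lowest unit index) into a setting where it does not hold; your conclusion $N(\fa[[X]])\subset\fm\,\fa[[X]]$ therefore fails, and with it the Neumann--series inversion. The paper's own argument has the analogous weak spot: it establishes only $\tau(\alpha(f)\tau(f)^{-1})\in\fm\,\fa[[X]]$ (using that $\alpha(f)f_k^{-1}$ has degree $<k$) and then asserts the same for $\tau(h\cdot\alpha(f)\tau(f)^{-1})$ for arbitrary $h$, which does not follow since $\tau$ is not multiplicative.

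In fact the trouble is with the statement itself, not merely the proof. In the paper's own Example~\ref{ex:1} ($\fa=k[x]/\langle x^2\rangle$, $f=xy^2+y+1$, $k=1$) the constant term $f_0=1$ makes $f$ a \emph{unit} of $\fa[[y]]$, so for every constant $c$ one has $0=(-c\,f^{-1})\cdot f + c$; hence uniqueness of $(q,r)$ with $q\in\fa[[y]]$ and $\deg r<1$ fails. What Corollary~\ref{cor:div} actually needs, and what the ``mimicked'' Euclidean division produces, is the unique solution with $q\in\fa[X]$ of degree $\le d-k$; that restricted statement is correct and follows directly from the factorization $f=uq$ of Proposition~\ref{prop:WeierFac} (already proved via reversal) together with ordinary polynomial division by the monic~$q$.
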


\begin{proof}
Define the linear maps $\alpha: \fa[[X]] \to \fa[X]$, $\sum_i a_i X^i \mapsto \sum_{i=0}^{k-1} a_i X^i$
and $\tau: \fa[[X]] \to \fa[[X]] $, $\sum_i a_i X^i \mapsto \sum_{i} a_{k+i}X^i$.
It is clear that  $b\in \fa[X]$ has degree $< k$ if and only if  $\tau(b)=0$ or $\alpha(b)=b$.
Therefore if $ g = q f +r$, $\deg(r)<k$ then $\tau(g)= \tau(q f)$.
Moreover $\tau (X^k h) = h$ and $ h = \tau(h) X^k + \alpha(h)$ for any $h\in \fa[[X]]$.
Thus
$$
\tau(g) = \tau(q (\alpha(f) + X^k \tau(f))) = \tau(q \alpha(f) )  +\tau(f) q.
$$
Let $Z= \tau(f)q$. Notice that $\tau(f) = f_k +X m'$ with $f_k\in \fa^\star$ and $m'\in \fm \fa[[X]]$ by assumption, hence $\tau(f)$ is
invertible in $\fa[[X]]$. The equation above can be rewritten: $\tau(g) = \tau( Z \frac{\alpha(f)}{\tau(f)}) + Z$.
Being able to solve  this equation in $Z$ uniquely gives $\tau(f) q$, hence $q$, hence $r=g -  f q$ in a unique way.

To do it, the proof differs slightly from that
of ~\cite[Theorem~9.2]{lang2002algebra} to the following point. The reason why the image of the map below 
$$
\tau \circ \frac{\alpha(f)}{\tau(f)}:\fa[[X]] \to \fm \fa[[X]]
$$
is $\fm \fa [[X]$ is because $\tau(f)^{-1}= f_k^{-1} \sum_{\ell\ge 0} (-1)^\ell (  m X)^\ell$ for an $m\in \fm \fa[[X]]$
and hence
$\alpha(f) \tau(f)^{-1} = \alpha(f) f_k^{-1} + \alpha(f) M$ where $M\in \fm \fa[[X]]$.
Therefore $\tau( \alpha(f) \tau(f)^{-1}) \in \fm \fa[[X]]$ since $\deg(\alpha(f) f_{k-1}) = \deg (\alpha(f)) \le k-1$. Thus  for
any power series $h\in \fa[[X]]$,
$$
\left( \tau \circ \frac{\alpha(f)}{\tau(f)}\right)  (h) \in \fm \fa [[X]].
$$
Now $\fa[[X]]$ being a complete ring,
$I + \tau \circ \frac{\alpha(f)}{\tau(f)} $ is invertible and  $Z = (I + \tau \circ \frac{\alpha(f)}{\tau(f)})^{-1} (\tau(g)) $ is determined in a unique way.
\hfill $\Box$ \end{proof}

%% Second a technical but easy result that:
%% \begin{lemma}\label{lem:idealproduct}
%% Consider an equality of ideals  $J = I_1,\cdots I_s$ in a ring $A$.
%% Given another ideal $K$, one has: $ J \cap K \supset I_1 \cdots I_{s-1} (I\cap K)$.
%% \end{lemma}\
%% \begin{proof}
%% Elements of the r.h.s. are finite sums of terms of type $a_1 \cdots a_s$ with 
%% $a_i \in I_i$ and $a_s \in K \cap I_s$. Hence all terms belong to $K$ ( as well as $J$
%% by assumption).
%% \hfill $\Box$ \end{proof}
\end{document}